\newcommand{\set}[2]{\left\{#1\mathrel{\left|\vphantom{#1}\vphantom{#2}\right.}#2\right\}}
\newcommand{\oneset}[1]{\left\{\mathinner{#1}\right\}}
\newcommand{\abs}[1]{\left|\mathinner{#1}\right|}
\newcommand{\N}{\mathbb{N}}
\newcommand{\Oh}{\mathcal{O}}
\newcommand{\cA}{\mathcal{A}}
\newcommand{\cF}{\mathcal{F}}
\newcommand{\cH}{\mathcal{H}}
\newcommand{\cI}{\mathcal{I}}
\newcommand{\cQ}{\mathcal{Q}}
\newcommand{\BAR}{\overline{\phantom{ii}}}
\newcommand{\smalloverline}[1]
{{\mspace{1mu}\overline{\mspace{-1mu}#1\mspace{-1mu}}\mspace{1mu}}}
\newcommand{\ov}[1]{\smalloverline{#1}}
\newcommand{\refthm}[1]{Theorem~\ref{#1}}
\newcommand{\reflem}[1]{Lemma~\ref{#1}}
\newcommand{\refprop}[1]{Proposition~\ref{#1}}
\newcommand{\refrem}[1]{Remark~\ref{#1}}
\newcommand{\reffig}[1]{Figure~\ref{#1}}
\newcommand{\refsec}[1]{Section~\ref{#1}}
\newcommand{\hpc}{hairpin completion\xspace}
\newcommand{\lcf}{linear con\-text-free\xspace}
\newcommand{\nd}{non-deterministic\xspace}
\newcommand{\NL}{{\bf NL}\xspace}
\newcommand{\TM}{{T}uring machine\xspace}
\newcommand{\svlstd}{single-valued \nd $\log$-space transduction\xspace}
\newcommand{\Svlstd}{Single-valued \nd $\log$-space transduction\xspace}
\newcommand{\e}{1}
\newcommand\lds{,\ldots ,} 
\newcommand\ccH{\ensuremath{\cH_k(L_1,L_2)}\xspace}
\newcommand\Hk{\cH_k}
\newcommand{\sse}{\subseteq}
\newcommand{\es}{\emptyset}
\newcommand{\dead}{t}
\renewcommand{\phi}{\varphi}
\newcommand{\alp}{\alpha}
\newcommand{\bet}{\beta}
\newcommand{\gam}{\gamma}
\newcommand{\del}{\delta}
\newcommand{\Sig}{\Sigma}
\newcommand{\gabag}{\gamma\alpha\beta\ov\alpha\ov\gamma}
\newcommand{\gaba}{\gamma\alpha\beta\ov\alpha}
\newcommand{\abag}{\alpha\beta\ov\alpha\ov\gamma}
\newcommand\RAS[1]{\overset{#1}\Longrightarrow}
\newcommand\ras[1]{\overset{#1}\longrightarrow}
\theoremstyle{plain}
\newtheorem{theorem}{Theorem}[section]
\newtheorem{proposition}[theorem]{Proposition}
\newtheorem{lemma}[theorem]{Lemma}
\newtheorem{corollary}[theorem]{Corollary}
\theoremstyle{definition}
\newtheorem{example}[theorem]{Example}
\theoremstyle{remark}
\newtheorem{remark}[theorem]{Remark}
\newenvironment{test}[1]
{\begin{trivlist}\item[\hskip\labelsep {\bfseries Test #1:\,}]\it}
{\end{trivlist}}
\newenvironment{vd}{\noindent\color{blue} VD }{}
\newenvironment{sk}{\noindent\color{red} SK }{}
\begin{document}

\title{It Is NL-complete to Decide Whether a Hairpin Completion of Regular Languages
Is Regular}

\author 
	{Volker Diekert, Steffen Kopecki \\
	\small
	{\tt \{diekert,kopecki\}@fmi.uni-stuttgart.de} \\
	\small
	University of Stuttgart,
	Institute for Formal Methods in Computer Science (FMI), \\
	\small
	Universit\"atsstra\ss e 38,
	D-70569 Stuttgart}

\maketitle

\begin{abstract}
	The hairpin completion is an operation on formal languages
	which is inspired by the hairpin formation in biochemistry.
	Hairpin formations occur naturally within DNA-computing. 
	It has been known that the hairpin completion of a regular language
	is linear context-free, but  not regular, in general.
	However, for some time it is was open whether the regularity of the
	hairpin completion of a regular language is is decidable.
	In 2009  this decidability problem has been solved positively in \cite{DiekertKM09} by 
	providing a polynomial time algorithm.
	In this paper we improve the complexity bound by showing
	that the decision problem is actually \NL-complete. 
	This complexity bound holds for both, the one-sided and the two-sided 
	\hpc{}s. 

	\vspace{\baselineskip}
	\noindent
	{\bf Keywords:} Automata and Formal Languages; Regular Languages, Finite Automata;
	\NL-Complete Problems;
	DNA-Computing; Hairpin Completion.
\end{abstract}



\section{Introduction}

The hairpin completion is a  natural operation of formal languages 
which has been inspired 
by molecular  phenomena in biology and which occurs naturally during  DNA-compu\-ting. 
An intramolecular base pairing, known as a \emph{hairpin},
is a pattern that can occur in single-stranded DNA and, more commonly, in RNA.
Hairpin or hairpin-free structures have numerous applications to DNA computing and
molecular genetics, 
see \cite{garzon1,garzon2,garzon3,KariKLST05,KariMT07}
and the references within for a detailed discussion.
For example, an instance of \textsc{3-Sat} has been solved with a
DNA-algorithm and one of the main concepts was to eliminate
all molecules with a hairpin structure, see \cite{KensakuSakamoto05192000}.

In this paper we study the \hpc from a purely formal language viewpoint.
The \hpc of a formal language was first defined by Cheptea, Mart{\'\i}n-Vide, and Mitrana in \cite{ChepteaMM06};
here we use a slightly more general definition which was introduced in \cite{DiekertKM09}.
The hairpin completion and some related operations have been studied in a series of papers
from language theoretic and algorithmic point of view, see e.g.,
\cite{DBLP:conf/cie/ManeaM07,ManeaMY09tcs,ManeaMM09,Ito2010,ManeaMY10,ManeaMM10,Kopecki10}.
The formal operation of the \hpc on words is best explained in Figure~\ref{whatanicehairpin}. 
In that picture as in the rest of the 
paper we mean by putting a \emph{bar} on a word (like $\ov \alp$) to read it
from right-to-left and in addition to replace a letter $a$ with 
the (Watson-Crick) complement $\ov a$. 
The hairpin completion of a regular language is linear context-free, but  not regular, in general \cite{ChepteaMM06}. 
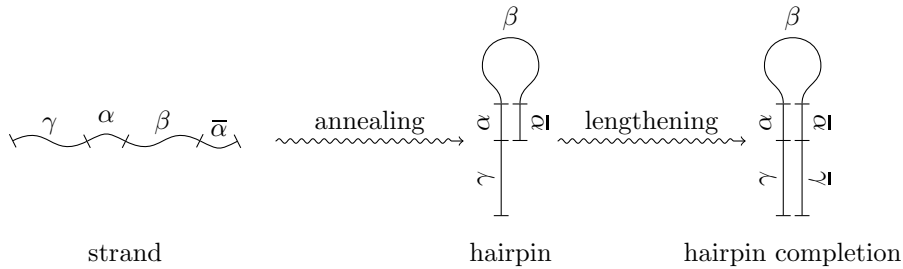
\begin{figure}[h]
  \centering
 
  \begin{tikzpicture}[text height=1.5ex,text depth=.25ex]
    \draw [|-|] (-4,1) .. controls +(.5,.25) and +(-.5,-.25) ..
      node [above] {$\gamma$} (-3,1);
    \draw [-] (-3,1) .. controls +(.25,.125) and +(-.25,.125) .. 
      node [above] {$\alpha$} (-2.5,1);
    \draw [|-|] (-2.5,1) .. controls +(.5,-.25) and +(-.5,.25) ..
      node [above] {$\beta$} (-1.5,1);
    \draw [-|] (-1.5,1) .. controls +(.25,-.125) and +(-.25,-.125) .. 
      node [above] {$\ov\alpha$} (-1,1);

	\draw [->,decorate,decoration=
			{snake,amplitude=.3mm,segment length=1.5mm,post length=1mm}]
		(-.5,1) -- (2,1) node [above,text centered,midway] {annealing};
    \draw [|-] (2.5,0) -- node [above,sloped] {$\gamma$} (2.5,1);
    \draw [|-|] (2.5,1) -- node [above,sloped] {$\alpha$} (2.5,1.5);
    \draw [-] (2.5,1.5) .. controls +(up:.25) and +(down:.25) .. (2.25,2)
      .. controls +(up:.5) and +(up:.5) .. node [above,sloped] {$\beta$} (3,2)
      .. controls +(down:.25) and +(up:.25) .. (2.75,1.5);
    \draw [|-|] (2.75,1.5) -- node [above,sloped] {$\ov\alpha$} (2.75,1);

	\draw [->,decorate,decoration=
			{snake,amplitude=.3mm,segment length=1.5mm,post length=1mm}]
		(3.25,1) -- (5.75,1) node [above,text centered,midway] {lengthening};

    \draw [|-] (6.25,0) -- node [above,sloped] {$\gamma$} (6.25,1);
    \draw [|-|] (6.25,1) -- node [above,sloped] {$\alpha$} (6.25,1.5);
    \draw [-] (6.25,1.5) .. controls +(up:.25) and +(down:.25) .. (6,2)
      .. controls +(up:.5) and +(up:.5) .. node [above,sloped] {$\beta$}
      (6.75,2) .. controls +(down:.25) and +(up:.25) .. (6.5,1.5);
    \draw [|-|] (6.5,1.5) -- node [above,sloped] {$\ov\alpha$} (6.5,1);
    \draw [-|] (6.5,1) -- node [above,sloped] {$\ov\gamma$} (6.5,0);

    \node at (-2.5,-.5) {strand};
    \node at (2.625,-.5) {hairpin};
    \node at (6.375,-.5) {hairpin completion};
  \end{tikzpicture}

  \caption{Hairpin completion of a DNA-strand (or a word).}
\label{whatanicehairpin}
\end{figure}

For some time it was not known whether regularity of the hairpin completion
 of a regular language
is decidable.
It was only in 2009 when  we presented in \cite{DiekertKM09} a decision algorithm.
Actually, we proved a better result by providing a polynomial time algorithm
with a (rough) runtime estimation of about $\Oh(n^{20})$.

In an extended abstract which appeared at the CIAA~2010
we presented a modified  approach to solve the same problem \cite{DieKop11}
which led, in particular, to the following two new results:  
First, the time complexity of  the new 
decision algorithm is in  $\Oh(n^8)$. Second,  the decision problem is 
NLOGSPACE-complete, i.e., \NL-complete. 

This paper is the journal version  of \cite{DieKop11} for the second result.  We decided to focus on the space complexity since, in terms of complexity,  \NL-completeness yields a precise characterization and because  
 the given page limit did not allow to include full proofs for 
all results of \cite{DieKop11}. 
Moreover, our proofs are still rather 
technical and the focus on the \NL-algorithm simplifies the presentation. 

We consider the  one-sided and the two-sided 
	\hpc{}s simultaneously.
	It turns out that \NL-completeness holds in both cases. 
	
	The paper is organized as follows. In \refsec{secpn} we fix the 
	notation used throughout. We give the formal definition of the \hpc $\ccH$
	and we discuss our input model using appropriate deterministic automata. 
	
	In \refsec{secmain} we state the main result (\refthm{thm:main}) and we give 
	a full proof in the subsequent subsections. 
	A main technical tool is the use of  \svlstd{}s, which might be not fairly standard. They are 
	explained in \refsec{svlstd}. In \refsec{secopen} we give a short conclusion and 
	we state some open problems.

\section{Preliminaries and Notation}\label{secpn}
We assume the reader to be familiar with the basic concepts of formal
language theory, automata theory, and complexity theory, as one can find in the text books\cite{HU,pap94}. 
By \NL we mean the complexity class NLOGSPACE, which contains the 
problems which can be decided by  a \nd{} \TM  using 
$\Oh(\log n)$ work space. Throughout we use the well-known result that 
\NL is closed under complementation, see e.g. \cite{pap94}. We also use the fact that 
if $L$ can be reduced to $L'$ via  some \svlstd and $L' \in $ \NL, then we have 
 $L \in $ \NL, see \cite{AlvarezJenner95} and \refsec{svlstd} for more details. 
 
By $\Sig$ we denote a finite alphabet with at least two letters.
The set of words over $\Sig$ is denoted $\Sig^*$; and the \emph{empty
word} is denoted by $\e$. Given a word $w$, we denote by $|w|$ its length
and $w(m)\in \Sig$ its $m$-th letter.
If $w=xyz$ for some $x,y,z\in \Sig^*$, then $x$ and $z$ are called \emph{prefix} and 
\emph{suffix} of $w$, respectively.
By a proper prefix $x$ of $w$ we mean a prefix such that $x\neq w$ (but we allow $x=\e$).
The prefix relation between words $x$ and $w$ is denoted 
by $x\leq w$ and for proper prefixes by $x < w$.

We assume that the alphabet $\Sig$ is equipped
with an involution  $\BAR: \Sig\to \Sig$.
An \emph{involution} for a set is  a bijection  such that
$\overline{\ov{a}} = a$. 
We extend the involution to words $a_1 \cdots a_n $ 
by  $\ov{a_1 \cdots a_n} =  \ov{a_n} \cdots \ov{a_1}$ where the $a_i$'s are letters. This convention is  like taking inverses in groups.
For languages $L \sse \Sig^*$ we denote by $\ov L$ the set $$\ov L = \set{\ov w}{w\in L }.$$
Throughout the paper $L_1, L_2$ are two regular languages in $\Sig^*$ and
by $k$ we mean a (small) 
constant. (In a biological setting $k \sim 10$ yields a reasonable value.)
According to \reffig{whatanicehairpin} we  define the \emph{hairpin completion} $\ccH$ by 
$$\ccH= \set{\gamma \alpha \beta \ov{\alpha} \ov{\gamma}}{(\gamma \alpha \beta \ov{\alpha}\in L_1
\vee \alpha \beta \ov{\alpha}
 \ov{\gamma} \in L_2 ) \wedge {|\alpha|}=k}.
$$
This definition is slightly more general than the original definition in
\cite{ChepteaMM06,ManeaMY09tcs}.
It allows us to treat the two-sided hairpin completion ($L_1=L_2$) and the 
one-sided hairpin completion (either $L_1=\es$ or $L_2 = \es$) simultaneously. 

A regular language can be specified by a \nd finite automaton (NFA) $\cA= (\cQ, \Sig, E, \cI, \cF)$,
where $\cQ$ is the finite set of \emph{states}, $\cI \sse \cQ$ is the set of 
\emph{initial states}, and  $\cF \sse \cQ$ is the set of 
\emph{final states}. The set $E$ contains labeled  \emph{edges} (or  \emph{arcs}),
it is a subset of $\cQ \times \Sig \times 
\cQ$. For a word $u \in \Sig^*$ we write $p \ras{u}{}q$, if there is a path
{}from state $p$ to $q$ which is labeled by the word $u$. Thus,  the
accepted language becomes
$$L(\cA) = \set{u\in \Sig^*}{\exists p \in \cI\, \exists q\in \cF:\; p \ras{u}{}q}.$$

Later it will be crucial to use also paths which avoid final states. For this 
we introduce a special notation. First remove all arcs $(p,a,q)$ where 
$q\in \cF$ is a final state. Thus, final states do not have incoming arcs anymore.
Let us write 
$p \RAS{u}{}q$, if there is a path {}from state $p$ to $q$ which is labeled by the word $u$ in this new  automaton after removing these arcs.
 Note that for such a 
path $p \RAS{u}{}q$ we allow $p\in \cF$, but on the path we never enter any final state again.

An NFA is called a \emph{deterministic finite automaton} (DFA), if it has exactly one initial state
and for every state $p  \in \cQ$ and every letter $a \in \Sig$ there is exactly one arc $(p,a,q) \in E$.
In particular,
in this paper a DFA  is always \emph{complete.} Thus, we can read every word
to its end.  We also write $p \cdot u = q$, if 
$p \ras{u}{}q$. This yields a (totally defined) function $\cQ \times \Sigma^* \to
\cQ$. (It defines an action of $\Sigma^*$ on $\cQ$
on the right.)

In the following we use a DFA accepting $L_1$ as well as a DFA accepting  $L_2$,
but the DFA for $L_2$ has to work from right-to-left.
Instead of introducing this concept we use a 
DFA (working as usual from left-to-right), which 
accepts  $\ov {L_2}$. This  automaton has the same number of states as 
(and is structurally isomorphic to) a DFA accepting the \emph{reversal language} of $L_2$.

As input we assume that the regular languages $L_1$ and $\ov {L_2}$ are  specified
by  DFAs $\cA_1$ and $\cA_2$  with state set $\cQ_i$, state $q_{0i} \in \cQ_i$ as initial state, and
$\cF_i \subseteq \cQ_i$ as final states.
By $n$ we denote the input size 
$$n = \abs{\cQ_1}+\abs{\cQ_2}.$$ 

We also need  the usual product DFA with 
state space 
$$\cQ = \set{(p_1,p_2)\in  \cQ_1 \times \cQ_2}{\exists w \in \Sig^*:(p_1,p_2)= 
(q_{01}\cdot w,\; q_{02}\cdot w) }  . $$
The action is given 
by $(p_1,p_2)\cdot a = (p_1\cdot a,\; p_2\cdot a).$
As $\cQ$ contains only reachable states, the size of $\cQ$ might be smaller than
$\abs{\cQ_1}\cdot \abs{\cQ_2}$. 
In the following we   work simultaneously in all three automata defined so far. 
Moreover,  in $\cQ_{1}$
and $\cQ_{2}$ we are going to work backwards. This leads to nondeterminism. 

\section{Main result}\label{secmain}

The purpose of this paper is to prove the following result: 
\begin{theorem} \label{thm:main}
	The following problem is
	\NL-complete:
	
{\bf Input:}  Two DFAs $\cA_1$ and $\cA_2$ recognizing $L_1$ and $\ov {L_2}$ with 
state sets $\cQ_1$ and $\cQ_2$ resp.{} such that $n = \abs{\cQ_1}+\abs{\cQ_2}$.

{\bf Question:} Is \ccH regular?
\end{theorem}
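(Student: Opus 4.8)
The plan is to prove the two directions separately: \NL-hardness by a reduction from a standard \NL-complete problem, and membership in \NL by reducing the regularity question to a reachability (or non-reachability) question in a polynomially-sized graph built from the three automata $\cA_1$, $\cA_2$, and the product automaton.

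For \NL-hardness I would reduce from graph (non-)reachability, using that \NL is closed under complement. The idea is to encode an instance of directed reachability into DFAs $\cA_1,\cA_2$ so that \ccH is regular if and only if the target vertex is not reachable from the source. Concretely one wants to arrange that a "bad pattern" witnessing non-regularity of the hairpin completion — intuitively, an iterable factor $\gamma$ together with a pumping situation on $\alpha\beta\ov\alpha$ that produces longer and longer overlaps — exists precisely when a path exists in the input graph. One first needs to isolate, from the characterisation used in \cite{DiekertKM09,DieKop11}, a combinatorial condition on the automata equivalent to non-regularity of \ccH, and then show this condition is expressible as existence of a path between two designated configurations; the reduction is logspace because it only copies edges and relabels states. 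The main care here is to make the gadget robust against the "$|\alpha|=k$" constraint and against the two-sided/one-sided distinction, so that the same construction handles $L_1=L_2$ and $L_i=\es$.

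For membership in \NL, the key is the structural characterisation of when \ccH is regular, phrased in terms of paths in $\cQ_1$, $\cQ_2$ and $\cQ$ that may avoid final states (the $\RAS{u}{}$ relation introduced in the preliminaries is exactly the tool for this). Non-regularity should be witnessed by a finite object of polynomial size: essentially a choice of states in the three automata, a short word $\alpha$ with $|\alpha|=k$ (constant, hence of constant description length), and a loop structure that can be pumped to create unbounded hairpin overlaps which cannot be "absorbed" by either $L_1$ or $L_2$. Guessing such a witness and verifying it — checking that the claimed edges exist, that certain states are/are not final, and that reachability along $\ras{}{}$ and $\RAS{}{}$ holds — is an \NL computation: each reachability check is in \NL, and by the Immerman–Szelepcsényi theorem so is each non-reachability check, and \NL is closed under the constant number of compositions needed. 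To make this rigorous one invokes the machinery of \svlstd{}s from \refsec{svlstd}: the reduction from "is \ccH regular?" to a fixed \NL problem (a bounded combination of reachability/non-reachability instances in a product graph) is carried out by a single-valued \nd $\log$-space transduction, and then the cited composition result of \cite{AlvarezJenner95} gives $L\in$ \NL.

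The main obstacle is the membership direction, specifically pinning down the right finite witness for non-regularity and bounding its size polynomially. The characterisation in the earlier papers is "rather technical" (as the introduction warns), involving an interplay between pumping the outer $\gamma$-part and pumping inside $\alpha\beta\ov\alpha$, and one must argue that whenever \ccH is non-regular there is such a witness whose every component — states, the word $\alpha$, and the lengths of the relevant loops — is polynomially (indeed, for $\alpha$, constantly) bounded in $n$. Once that pumping/normalisation lemma is in place, the \NL algorithm and its verification are routine; the remaining work is bookkeeping to ensure the whole reduction respects the \svlstd format so that \refsec{svlstd}'s tools apply.
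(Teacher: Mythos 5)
Your proposal is an accurate description of the overall architecture --- \NL-hardness via a reduction from graph reachability (using closure of \NL under complement), and membership via guessing a polynomial-size witness for non-regularity and verifying it with a bounded number of reachability and non-reachability queries composed through \svlstd{}s --- but it defers exactly the part that constitutes the proof. For hardness, you never exhibit the gadget; note also that no ``combinatorial condition equivalent to non-regularity'' is needed there: the paper simply takes $L_1 = a^*L\ov a^k$ and $L_2=\es$ for a DFA-language $L\sse\oneset{b,\ov b}^*$, so that $\Hk(L_1,\es)=\set{a^iw\ov a^j}{i\geq j\geq k \land w\in L}$ is regular \IFF $L=\es$, which is the complement of reachability. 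Your route of first characterising non-regularity and then encoding that characterisation as a path problem would make the easy hardness direction depend on the hard membership direction, which is backwards.

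For membership, your sentence ``once that pumping/normalisation lemma is in place, the \NL algorithm and its verification are routine'' names precisely the gap: the content of the paper \emph{is} that lemma, together with the proof that the absence of a witness implies regularity. Concretely, the paper builds a $(k{+}1)$-level NFA $\cA$ over bridges $(p_1,p_2,q_1,q_2)$ (quadruples with $B(p_1,p_2,q_1,q_2)\neq\es$), running $\cA_1,\cA_2$ forwards in one component and backwards in the other; it proves a unique-path property (\reflem{unam}) and a disjoint decomposition $\ccH=\bigcup_\tau B_\tau^{R_\tau}$ (\reflem{lem:str}); it shows that regularity forces every non-trivial strongly connected component of $\cA$ to be a simple Hamiltonian cycle and every path from it to a final bridge to follow that cycle (\reflem{lem:loop}, Test~1); and it then handles the genuinely delicate case (\reflem{lem:egil}, Tests~2 and~3) in which a pumped word $uv^sxy\ov\alp\ov v^{s+1}\ov u$ can remain in $\ccH$ only by switching from a prefix in $L_1$ to a suffix in $L_2$, which requires a specific factorization $xy\ov\alp\ov v=\mu\del\bet\ov\del\ov\mu$ with $p_2\cdot\mu\del\ov\bet\ov\del\in\cF_2$. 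The converse --- that passing all tests makes $\ccH$ a finite union of regular languages (\reflem{lem:finaltests}) --- is entirely absent from your proposal, yet without it the guessed witness is only a sufficient condition for non-regularity and the algorithm decides nothing. So while your framework coincides with the paper's, none of the mathematics that makes it work is present.
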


Since \NL is included in {\bf P} we obtain the following result {}from 
\cite{DiekertKM09} as a corollary.
\begin{corollary}\label{cor:main}
The problem whether the hairpin completion \ccH is regular is decidable in polynomial time. 
\end{corollary}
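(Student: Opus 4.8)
**The plan is to prove both an \NL-hardness lower bound and an \NL-membership upper bound for the problem "is $\ccH$ regular?".**

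For the lower bound, I would reduce a known \NL-complete problem --- the natural candidate is graph reachability (does a directed graph have a path from $s$ to $t$?), or equivalently the emptiness/nonemptiness problem for DFAs, which is \NL-complete by the usual argument. The idea is to encode the reachability instance into DFAs $\cA_1,\cA_2$ so that $\ccH$ is regular \emph{iff} there is no $s$-$t$ path (or vice versa). Here I would lean on the known structural dichotomy from \cite{DiekertKM09}: the hairpin completion of a regular language fails to be regular precisely because of a certain "pumping" pattern, intuitively an unbounded family of words $\gamma\alpha\beta\ov\alpha\ov\gamma$ where the $\gamma$-part can be arbitrarily long and forces the completion to remember the length of $\gamma$ across the hairpin. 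So I would build $\cA_1$ and $\cA_2$ whose accepted languages contain such a "dangerous" cycle exactly when the reachability instance is positive, and are otherwise finite (hence their hairpin completion is trivially regular since a finite language has finite, hence regular, hairpin completion). Care is needed to make the reduction computable in logarithmic space --- but since it is a direct local rewriting of vertices and edges into states and arcs with a few gadget letters, this is routine. This step I expect to be comparatively easy.

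The upper bound is the substantial part. The strategy is to characterize non-regularity of $\ccH$ by the existence of a finite combinatorial witness inside the three automata introduced in \refsec{secpn} (the DFA $\cA_1$ for $L_1$, the DFA $\cA_2$ for $\ov{L_2}$, and their reachable product $\cQ$), and then show that searching for such a witness is an \NL-computation --- essentially a reachability search in a polynomial-size "configuration graph" whose vertices are tuples of states (plus a bounded amount of bookkeeping: the current guess of $\alpha$ of length $k$, which is a constant, and a few mode bits). The heart of the matter is to pin down the right witness. Following the analysis behind \cite{DiekertKM09}, non-regularity should be equivalent to the simultaneous presence of (i) a long $\gamma$-prefix that can be "pumped" via a cycle in the relevant automaton, (ii) a matching $\ov\gamma$-suffix forced by the involution structure, and (iii) the absence of any "shortcut" --- some alternative accepting computation (using the $p\RAS{u}{}q$ final-state-avoiding paths, which is exactly why that notation was set up) that would let the completion be produced without remembering $|\gamma|$, thereby collapsing it to a regular set. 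Conditions (i) and (ii) are existential reachability statements and hence in \NL directly; the delicate point is condition (iii), a \emph{universal} (non-existence) statement, which is where the closure of \NL under complementation is invoked, together with the reduction-via-\svlstd machinery of \refsec{svlstd} to glue the pieces together without leaving \NL.

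**The main obstacle** I anticipate is making the witness characterization both \emph{correct} and \emph{space-frugal}: proving that non-regularity of $\ccH$ holds if and only if the finite witness exists requires redoing (in a streamlined form) the structural pumping arguments of \cite{DiekertKM09}, and the challenge is to phrase every ingredient so that it becomes a reachability query in a graph of size polynomial in $n=\abs{\cQ_1}+\abs{\cQ_2}$, using only $\Oh(\log n)$ bits --- in particular keeping the "no shortcut" clause expressible as an \NL (equivalently, co-\NL) predicate rather than something that naively needs to enumerate exponentially many candidate words. Once the characterization is in this form, membership follows by running the nondeterministic reachability searches, taking complements where needed, and composing via single-valued log-space transductions as licensed in the Preliminaries. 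The two bounds together give \NL-completeness, and \refcor{cor:main} is then immediate since $\NL\subseteq\mathbf{P}$.
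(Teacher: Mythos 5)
Your proposal takes essentially the same route as the paper: \refcor{cor:main} is obtained exactly as you describe, by establishing the \NL-completeness of \refthm{thm:main} --- hardness via a reduction from graph accessibility (DFA emptiness) using a language of the form $a^*L\ov a^k$, and membership via reachability searches in a polynomial-size automaton over tuples of states, combined with closure of \NL under complementation and composition of \svlstd{}s --- and then invoking $\NL\subseteq\mathbf{P}$. The only (harmless) excess is that the hardness half is not needed for the corollary itself.
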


We now turn to the proof of \refthm{thm:main}. 
The  \NL-hardness is immediate:

\begin{lemma}
	The problem whether the hairpin completion \ccH is regular is
	\NL-hard, even for $L_2 =\es$. 
\end{lemma}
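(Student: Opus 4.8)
To show \NL-hardness it suffices to reduce some standard \NL-complete problem to the question ``Is $\ccH$ regular?''. The natural candidate is (non-)reachability in a directed graph, equivalently the emptiness problem for NFAs, which is \NL-complete and closed under complement. So the plan is: given a directed graph $G$ with two distinguished vertices $s,t$, build in logarithmic space two DFAs $\cA_1$, $\cA_2$ (with $L_2=\es$, so only the $L_1$-clause of the hairpin completion is active) such that $\ccH$ is regular \emph{if and only if} $t$ is \emph{not} reachable from $s$ in $G$. Since \NL $=$ co\NL, reducing either reachability or non-reachability is fine; I will aim for whichever orientation makes the ``regular'' side correspond to the ``trivial'' graph.

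\textbf{Key steps.} First, recall the known source of non-regularity: a hairpin completion of the form $\set{\gamma\alpha\beta\ov\alpha\ov\gamma}{\gamma\alpha\beta\ov\alpha\in L_1,\ \abs\alpha=k}$ fails to be regular precisely when $L_1$ forces arbitrarily long matching pairs $\gamma\cdots\ov\gamma$ — morally a ``copy language'' is embedded. The cleanest hardness gadget is a fixed regular template: pick letters so that $L_1$ looks like $c\,\Sigma^*\,w\,\alpha_0\,\beta_0\,\ov{\alpha_0}$, where $\alpha_0$ is a fixed word of length $k$, $\beta_0$ a fixed separator, and the prefix part $c\,\Sigma^*\,w$ either (a) collapses to a bounded set, giving regularity, or (b) contains all of $\{x \mid x \text{ labels an } s\text{-}t\text{ path in }G\}$, which can be made to produce unbounded matching pairs and hence non-regularity. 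Second, I would encode the graph so that the set of $s$-$t$-path labels, intersected with the template, is exactly what sits in the $\gamma$-position; the standard construction of an NFA from $G$ whose accepted language is (essentially) the set of such path labels is log-space computable, and determinizing is unnecessary if we instead directly describe $\cA_1$ as a DFA over a cleverly chosen alphabet (e.g. labelling each edge of $G$ by a distinct letter, so the path-language is a subset of a fixed regular set and the ``automaton'' is basically $G$ itself read deterministically). Third, verify both directions of the equivalence: if $t$ is unreachable, the relevant language is finite (or regular with no pumpable matching pair), so by the characterization its hairpin completion is regular; if $t$ is reachable, there is a cycle-containing walk, the $\gamma$-part is infinite with a pumpable segment on both the $\gamma$ and $\ov\gamma$ side, and a pumping argument shows $\ccH$ is not regular.

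\textbf{Main obstacle.} The delicate point is not the graph encoding but making the equivalence \emph{exact}: I must ensure that reachability of $t$ produces genuinely unbounded \emph{correlated} repetitions $\gamma\cdots\ov\gamma$ (a single long $\gamma$ is not enough — regularity of a hairpin completion can survive a long but ``free'' prefix; what kills regularity is an unbounded family of words whose left part determines the right part), and conversely that unreachability leaves only boundedly many such words so that the completion really is regular. Pinning down a template for which ``is the $\gamma$-language finite?'' is equivalent to $s$-$t$-non-reachability, while keeping $\cA_1$ a genuine complete DFA built in deterministic log space, is the part that needs care; everything else (log-space computability, the pumping lemma application) is routine.
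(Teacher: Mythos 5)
Your overall strategy --- reduce graph reachability to the regularity question by encoding the graph into $L_1$, using a pumping argument for the non-regular direction and finiteness for the regular direction --- is the right one, but the specific equivalence you aim for has a genuine gap. You place the set $P$ of $s$-$t$-path labels in the $\gamma$-position and want ``$\ccH$ regular $\iff$ $P$ finite $\iff$ $t$ not reachable from $s$.'' The second equivalence is false: in an acyclic graph $t$ may be reachable while $P$ is finite and nonempty, in which case $L_1$ is finite, $\ccH$ is finite, hence regular, and your reduction gives the wrong answer. The intermediate claim ``if $t$ is reachable, there is a cycle-containing walk'' is simply not true. This is fixable --- either add a fresh-letter self-loop at $s$ so that reachability forces $P$ to be infinite, or reduce from the (also \NL-complete) infiniteness problem for DFAs instead of reachability --- but as written the equivalence you need does not hold.

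The paper sidesteps this entirely by putting the graph language in the $\beta$-slot rather than the $\gamma$-slot: it reduces reachability to emptiness of a DFA language $L\sse\oneset{b,\ov b}^*$ and sets $L_1=a^*L\ov a^k$, so that $\Hk(L_1,\es)=\set{a^iw\ov a^j}{i\geq j\geq k \wedge w\in L}$. Here the correlated pumpable pair is the unary prefix/suffix $a^i\cdots\ov a^j$, supplied for free by the $a^*$, and a \emph{single} word $w\in L$ already yields a non-regular set; so the criterion becomes emptiness of $L$, which is exactly non-reachability, with no cycle condition needed. Your version also has to manage a growing alphabet equipped with an involution (one letter per edge), whereas the fixed four-letter alphabet $\oneset{a,\ov a,b,\ov b}$ suffices after the standard out-degree-two normalization. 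Finally, be careful with your appeal to a ``precise'' characterization of when a hairpin completion is non-regular: no such characterization is available to you at this point (it is essentially the subject of the whole paper); you only need, and should only claim, the two one-sided implications you actually prove.
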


\begin{proof}
	The well-known \NL-complete {\em Graph-Accessibility-Problem} 
	\cite{pap94} 
	can easily be reduced to the following problem for DFAs:
	
	Let $\Sig = \oneset{a,\ov a,b,\ov b}$ be an alphabet with four letters.
	Decide for a given DFA, which accepts a language $L \sse \oneset{b,\ov b}^*$,
	whether or not 
	$L$ is empty. 
	
	Now let $L_1 = a^*L\ov a^k$. The hairpin completion
	\begin{equation*}
		\Hk(L_1,\es) = \set{a^iw\ov a^j}{i\geq j\geq k \land w\in L}
	\end{equation*}
	is regular if and only if $L$ is empty (because $L \sse \oneset{b,\ov b}^*$).	
\end{proof}

The difficult part is to show that deciding regularity of 
\ccH is in  \NL. This is subject of the rest of
this section.

\subsection{\Svlstd{}s}\label{svlstd}
A \svlstd is performed by a \nd $\log$-space Turing machine
which may stop  on every input $w$ with some  output $r(w)$. \emph{Single-valued} means that, in case that the machine stops on input $w$, 
 the output is always the same, independently of \nd moves during the 
    computation. Thus, $w \mapsto r(w)$ is a well-defined function from words to words. A \svlstd is a \emph{reduction} {}from  a language $L$ to $L'$, 
if we have $w \in L \iff  r(w) \in L'$.

The following lemma belongs to folklore. Its proof is exactly the 
same as for the standard case of deterministic $\log$-space reductions \cite{HU}
and therefore omitted.

\begin{lemma}\label{lem:svlstd}
Let $L' \in $ \NL and assume that there exists a \svlstd from $L$ to $L'$. Then we have $L \in $ \NL, too. 
\end{lemma}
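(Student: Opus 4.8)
The plan is to mimic the classical proof that deterministic $\log$-space many-one reductions compose with \NL; the one new point --- the nondeterminism of the transducer --- is handled precisely by the single-valued hypothesis. Fix a \nd $\log$-space \TM $N$ computing $r$ and an \NL-machine $M'$ for $L'$. First normalise $N$ so that along at least one computation path it halts within $p(n)$ steps for a fixed polynomial $p$: the shortest halting computation of $N$ on an input $w$ of size $n$ visits pairwise distinct internal configurations (state, work tape, head positions; the write-only output tape does not influence transitions), and there are only $2^{\Oh(\log n)} = \mathrm{poly}(n)$ of these, so equipping $N$ with a step counter that aborts after $p(n)$ steps preserves at least this path. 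By single-valuedness that surviving path still outputs $r(w)$, and in particular $\abs{r(w)} \le p(n)$, so an output-position counter needs only $\Oh(\log n)$ bits. Normalise $M'$ in the same routine way.

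The key gadget is a $\log$-space subroutine $\textsc{Sym}(w,j)$ that nondeterministically returns the $j$-th letter of $r(w)$, or a flag ``$j$ out of range'' when $j>\abs{r(w)}$. It simulates $N$ on $w$ (with the step counter), maintaining a counter of how many letters $N$ has emitted so far; it records the letter emitted when this counter first reaches $j$ and, when $N$ halts, outputs that letter, or the out-of-range flag if fewer than $j$ letters were emitted. Because $r$ is single-valued, \emph{every} halting branch of $N$ emits the same string $r(w)$; hence every successful branch of $\textsc{Sym}(w,j)$ returns the same value, so the subroutine may safely be called many times, possibly following different nondeterministic paths, without any risk of inconsistent answers. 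Its space is the $\Oh(\log n)$ work space of $N$ plus the two counters, i.e.\ $\Oh(\log n)$.

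Now decide $L$ as follows: simulate $M'$ on the ``virtual input'' $r(w)$, storing $M'$'s work tape and the position $j$ of its input head, and answering each of $M'$'s input queries by invoking $\textsc{Sym}(w,j)$ (returning the end-of-input marker when it reports out of range). While $\textsc{Sym}$ runs we retain $M'$'s configuration; everything in sight --- $M'$'s work tape ($\Oh(\log\abs{r(w)}) = \Oh(\log n)$), the head position $j$, and the subroutine --- is $\Oh(\log n)$, so the whole simulation is an \NL-computation. Since $N$ has a halting computation on $w$, each call $\textsc{Sym}(w,j)$ has a successful branch delivering $r(w)(j)$, so the simulation has an accepting computation iff $M'$ accepts $r(w)$, i.e.\ iff $r(w) \in L'$, i.e.\ (by the definition of a reduction) iff $w \in L$. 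Hence $L \in \NL$.

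The only genuine subtlety is the combination of ``recompute $r(w)$ on demand'' with nondeterminism: $M'$ may probe the same position of $r(w)$ repeatedly, and each probe re-runs $N$ from scratch, so all these runs must agree --- which is exactly what single-valuedness provides. The step-counter normalisation is only a bookkeeping device ensuring that all counters stay logarithmic in size.
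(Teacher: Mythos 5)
Your proof is correct and is precisely the folklore argument the paper invokes and omits: the standard on-demand recomputation of the reduced input, as for deterministic $\log$-space reductions, with the single observation that single-valuedness makes repeated nondeterministic recomputations of $r(w)(j)$ consistent. Nothing further is needed.
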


Due to \reflem{lem:svlstd} we are free to use several \svlstd{}s in order to 
enrich the input. 

\subsection{Bridges}\label{bridges}
Let $\cQ_1, \cQ_2$ be the state sets as  fixed 
by \refthm{thm:main}. For every quadruple $(p_1,p_2,q_1,q_2) \in \cQ_1 \times \cQ_2\times \cQ_1\times \cQ_2$
we define a regular language $B(p_1,p_2,q_1,q_2)$ as follows: 
$$B(p_1,p_2,q_1,q_2) = \set{\bet\in \Sig^*}{p_1\cdot \bet = q_1 \wedge p_2\cdot \ov \bet = q_2}.$$

We say that a quadruple $(p_1,p_2,q_1,q_2)$ is a \emph{bridge}, if  $B(p_1,p_2,q_1,q_2)\neq \es$.
The idea behind  this notation is that $B(p_1,p_2,q_1,q_2)$ closes 
a gap between pairs $(p_1,p_2)$ and $(q_1,q_2)$. 
For a bridge $(p_1,p_2,q_1,q_2)$ the words $\bet \in B(p_1,p_2,q_1,q_2)$
correspond later exactly to the $\bet$-part in \reffig{whatanicehairpin}. 

\begin{lemma}\label{lem:bridge}
	There is a \svlstd  which
	outputs the table of all bridges.
\end{lemma}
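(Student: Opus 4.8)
The plan is to recognize that the predicate "$(p_1,p_2,q_1,q_2)$ is a bridge" is itself an \NL-predicate, and then to assemble the full table of bridges by a single log-space machine that iterates over all quadruples, invoking the \NL decision procedure for each. First I would observe that $B(p_1,p_2,q_1,q_2)\neq\es$ exactly when there exists a word $\bet$ such that reading $\bet$ in $\cA_1$ leads from $p_1$ to $q_1$ while reading $\ov\bet$ in $\cA_2$ leads from $p_2$ to $q_2$. Reading $\ov\bet$ forwards from $p_2$ to $q_2$ is the same as reading $\bet$ backwards from $q_2$ to $p_2$ after applying the bar to each letter; concretely, $p_2\cdot\ov\bet = q_2$ iff there is a path in $\cA_2$ from some state labelled by the reversal of $\bet$ with barred letters, i.e.\ iff $q_2$ can reach $p_2$ by reading $\ov\bet$ read right-to-left. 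The clean way to phrase this: guess the letters of $\bet$ one at a time, maintaining a current pair of states, starting from $(p_1,q_2)$, and at each step move forward in $\cA_1$ on letter $a$ and "backward" in $\cA_2$ on letter $\ov a$ (i.e.\ nondeterministically pick a predecessor); accept when we simultaneously reach $q_1$ in the first component and $p_2$ in the second. This is a reachability question in the product graph whose vertices are pairs in $\cQ_1\times\cQ_2$ and whose edges are the "forward in $\cA_1$, backward in $\cA_2$" moves — a graph of size $O(n^2)$ that is presented implicitly in log space from the input DFAs. Hence deciding whether a given quadruple is a bridge is an instance of \textsc{Graph-Accessibility}, so it lies in \NL.

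Next I would turn this \NL decision procedure into the required transduction. The output table is simply, for each of the $O(n^4)$ quadruples $(p_1,p_2,q_1,q_2)$, one bit recording whether it is a bridge; the output has polynomial length, and the index of the current quadruple can be held in $O(\log n)$ space by a straightforward nested loop over $\cQ_1\times\cQ_2\times\cQ_1\times\cQ_2$. For each quadruple the machine runs the \NL reachability test above; since \NL $=$ co\NL, we may also decide the complementary (non-bridge) instances in \NL, so the machine always halts and always outputs a definite bit. This is exactly the single-valuedness required of a \svlstd: on every input the eventual output string is the same, regardless of the nondeterministic choices made while testing individual quadruples. By \reflem{lem:svlstd} (or rather by its role in enriching the input, as remarked after it) this transduction is admissible.

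One point needs a little care: a single \nd log-space machine cannot naively "call" an \NL subroutine and branch on its answer, because rejecting computations of the subroutine would be indistinguishable from rejecting computations of the main machine. The standard fix — and the main (though routine) obstacle here — is to use the co\NL algorithm in the style of Immerman–Szelepcsényi: to certify that a quadruple is \emph{not} a bridge, the machine verifies, in log space with nondeterminism, that no target pair is reachable in the product graph, using the inductive counting of the number of reachable vertices at each distance. Interleaving the positive test (guess a path) and the negative test (counting certificate) lets the machine output the correct bit for each quadruple on every halting branch and reject only on branches that guessed an invalid certificate, which never produce output. Concatenating these per-quadruple bits in the fixed lexicographic order of quadruples yields the claimed table, and the whole computation uses only $O(\log n)$ work space, which completes the construction.
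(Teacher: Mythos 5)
Your proposal is correct and follows essentially the same route as the paper: the paper's proof simply notes that graph reachability and its complement are in \NL and that one can therefore decide bridgehood for each quadruple and emit the answer. Your write-up merely makes explicit what the paper leaves implicit (the ``forward in $\cA_1$, backward in $\cA_2$'' product graph and the Immerman--Szelepcs\'enyi counting needed to certify non-bridges on every halting branch), which is a faithful elaboration rather than a different argument.
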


\begin{proof}
	Graph reachability and its complement are solvable in \NL. Therefore
	we can decide for each quadruple
	$(p_1,p_2,q_1,q_2) \in \cQ_1 \times \cQ_2\times \cQ_1\times \cQ_2$
	if it is a bridge, and we can output $(p_1,p_2,q_1,q_2)$ in the affirmative case.
\end{proof}

\subsection{The NFA $\cA$}\label{secnfa}
Next, we  construct an NFA, which is called simply $\cA$, and 
we explore properties of this NFA.
The NFA $\cA$ uses $k+1$ levels (or layers) of a product automaton over $\cQ \times \cQ_1\times \cQ_2 \sse 
\cQ_1 \times \cQ_2\times \cQ_1\times \cQ_2$ where $ \cQ$ has been defined 
as in Section~\ref{secpn}.
Hence, the number  of states is at most $(k+1)n^4$ which is in $\Oh(n^4)$.  

Formally, we use a \emph{level} for each $\ell$ with $0 \leq \ell \leq k$, hence there are $k+1$ levels.
By $[k]$ we denote in this paper the set $\oneset{0 \lds k}$. 
Define 
$$\cQ_\cA= \set{((p_1,p_2),q_1,q_2, \ell) \in \cQ \times \cQ_{1} \times \cQ_{2} \times [k]} 
{(p_1,p_2,q_1,q_2) \text{ is a bridge}} $$ as the state space of an NFA called $\cA$.

We call  a state $((p_1, p_2) ,q_1,q_2,\ell)$ a \emph{bridge at level $\ell$}, and we  keep in mind that there exists a word $w$
such that $p_1\cdot w= q_1$ and $p_2\cdot \ov w= q_2$. 
Frequently (and by a slight abuse of language) we call  a state $((p_1, p_2) ,q_1,q_2,\ell)$ simply a \emph{bridge}, too. 
Bridges at level $\ell$ are also denoted by $(P,q_1,q_2,\ell)$ with $P= (p_1, p_2)\in \cQ$, 
$q_i \in Q_i$, $i=1,2$, and  $\ell \in [k]$.
Bridges at different levels play a  central r{\^{o}}le in the following. 

Let $a\in \Sigma$.  The $a$-transitions in the NFA 
are given by the following arcs: 
\begin{alignat*}{2}
(P,\;q_1\cdot \ov a ,\; q_2 \cdot \ov a ,0) &\ras{a}{} (P\cdot a ,\;q_1,q_2,0)& \text{ for }
q_i\cdot \ov a \notin \cF_i, \, i = 1,2,\\
(P,\;q_1\cdot \ov a ,\;q_2 \cdot  \ov a ,0) &\ras{a}{} (P\cdot a ,\;q_1,q_2,1)& \text{ for }
q_1\cdot \ov a \in \cF_1 \text{ or }q_2\cdot \ov a \in \cF_2, \\
(P,\;q_1\cdot \ov  a ,\;q_2 \cdot  \ov a ,\ell) &\ras{a}{} (P\cdot a ,\;q_1,q_2,\ell + 1)& \text{ for }
1 \leq \ell < k.
\end{alignat*}

Thus, for the $P$-component an $a$-transition behaves as in a usual product automaton, but for the $q_1$- and $q_2$-components we move backwards using the $\ov a$-transitions in the original automata. This is why the resulting automaton  $\cA$
is \nd.

Observe that  no state
of the form $(P,q_1,q_2,0)$ with $q_1 \in \cF_1$ or $q_2 \in \cF_2$ has an outgoing 
arc to level zero; we must switch to level one. There are no outgoing arcs on level $k$, and 
for each tuple $(a, P,q_1,q_2,\ell) \in \Sigma \times \cQ \times \cQ_{1} \times \cQ_{2}\times [k-1] $ there exists 
at most  one arc $(P,q_1',q_2',\ell) \ras{a}{} (P\cdot a ,q_1,q_2,\ell')$.
Indeed, the $P\cdot a$ is  determined by $P$ and the letter $a$, and the  triple $(q_1',q_2',\ell')$ is determined by $(q_1,q_2,\ell)$ and the letter $a$. Not all such arcs exist in $\cA$, because
$(P,q_1',q_2',\ell)$ might be a bridge whereas  $(P\cdot a ,q_1,q_2,\ell')$ is not.
(Observe however that if   $(P\cdot a ,q_1,q_2,\ell')$ is a bridge, then 
$(P,q_1',q_2',\ell)$ is a bridge, too.) 

The set of initial states $\cI$ contains all bridges at level zero of the form
$(Q_0,q'_1,q'_2,0)$ with $Q_0 = (q_{01},\, q_{02})$. 
The set of final states $\cF$ is given by all bridges $(P,q_1,q_2,k)$ at level $k$.

This concludes the definition of the NFA $\cA$. For an example and a graphical presentation of the NFA, see Figure~\ref{steffen}.

\begin{remark}\label{rem:prec}
By Lemma~\ref{lem:bridge}, the NFA $\cA$ can be computed by a \svlstd. Thus, we have direct access
to $\cA$ and henceforth we assume that $\cA$ is also written on  the input tape. 
\end{remark}

\begin{figure}[htb]
  \centering
  \begin{tikzpicture}[shorten >=1pt,node distance=2cm,auto,initial text=,%
  initial distance=4mm,bend angle=45,scale=.85]
    \tikzstyle{every node}=[scale=.85]
    \tikzstyle{every loop}=[distance=.5cm]
	\node at (0,5.5)	[state,initial]		(A0)						{$q_{01}$};
	\node 			[state]				(A1)		[right of=A0]	{$p_1$};
	\node 			[state, accepting]	(A2)		[right of=A1]	{$f_1$};
	\node 			[state]				(A3)		[below of=A1]	{$\dead_1$};
	\node			[above of=A1,node distance=1.5cm]			{$L_1 = a^*(b+ \ov b)\ov a$};
    
    \path [->]	(A0)	edge	[loop above]	node			{$a$}			()
    					edge					node			{$b,\ov b$}		(A1)
    					edge	[bend right]	node	[swap]	{$\ov a$}		(A3)
    			(A1)	edge					node			{$\ov a$}		(A2)
	    				edge					node			{$a,b,\ov b$}	(A3)
    			(A2)	edge	[bend left]		node			{$\Sigma$}		(A3)
    			(A3)	edge	[loop below]	node			{$\Sigma$}		();

    \node at (7,5.5)	[state,initial]		(B0)					{$q_{02}$};
    \node			[state]				(B1)	[right of=B0]	{$p_2$};
    \node			[state, accepting]	(B2)	[right of=B1]	{$f_2$};
    \node			[state]				(B3)	[below of=B1]	{$\dead_2$};
	\node			[above of=B1,node distance=1.5cm]			{$\ov{L_2} = a^*\ov b\ov a$};
    
    \path [->]	(B0)	edge	[loop above]	node			{$a$}			()
    					edge					node			{$\ov b$}		(B1)
    					edge	[bend right]	node	[swap]	{$\ov a,b$} 	(B3)
    			(B1)	edge					node			{$\ov a$}		(B2)
    					edge					node			{$a,b,\ov b$}	(B3)
    			(B2)	edge	[bend left]		node			{$\Sigma$}		(B3)
    			(B3)	edge	[loop below]	node			{$\Sigma$}		();

%
    \tikzstyle{every state}=[rectangle]
    \tikzstyle{every pin}=[pin distance=4mm]
    \tikzstyle{every pin edge}=[shorten <=1pt]
    \tikzstyle{init}=[pin={[pin edge={<-}]170:}]
    
    \node at (0,0)		[state,initial]		(A)		{$(Q_0,\dead_1,\dead_2,0)$};
    \node at (3,0)		[state,init]		(B)		{$(Q_0,f_1,f_2,0)$};
    \node at (6,0)		[state,accepting]	(B1)	{$(Q_0,p_1,p_2,1)$};
    \node at (7.25,0)	[right]						{$B(q_{01},q_{02},p_1,p_2)= b$};
    \node at (3,1)		[state,initial]		(C)		{$(Q_0,f_1,\dead_2,0)$};
    \node at (6,1)		[state,accepting]	(C1)	{$(Q_0,p_1,\dead_2,1)$};
    \node at (7.25,1)	[right]						{$B(q_{01},q_{02},p_1,\dead_2)= aa^+b + a^*\ov b$};
    \node at (6,2)		[state,accepting]	(C2)	{$(Q_0,p_1,f_2,1)$};
    \node at (7.25,2)	[right]						{$B(q_{01},q_{02},p_1,f_2)= ab$};
    \node at (3,-1)	[state,initial]		(D)		{$(Q_0,\dead_1,f_2,0)$};
    \node at (6,-1)	[state,accepting]	(D1)	{$(Q_0,\dead_1,p_2,1)$};
    \node at (7.25,-1)	[right]						{$B(q_{01},q_{02},\dead_1,p_2)= b\ov a\ov a^+$};
    \node at (6,-2)	[state,accepting]	(D2)	{$(Q_0,f_1,p_2,1)$};
    \node at (7.25,-2)	[right]						{$B(q_{01},q_{02},f_1,p_2)= b\ov a$};
    \node at (-.8,2) 	{$\cA$:};

    \path [->]	(A)	edge										node			{$a$}	(B)
					edge	[loop,out=125,in=55,distance=1cm]	node		 	{$a$}	(A)
					edge										node			{$a$}	(C)
					edge										node	[swap]	{$a$}	(D)
				(B)	edge										node			{$a$}	(B1)
				(C)	edge										node			{$a$}	(C1)
					edge										node			{$a$}	(C2)
				(D)	edge										node			{$a$}	(D1)
					edge										node	[swap]	{$a$}	(D2);
  \end{tikzpicture}
  \caption{DFAs for $L_1$ and $\ov{L_2}$ and the resulting NFA $\cA$ with 4 initial states
  and 5 final states associated to the (\lcf) \hpc
  $\ccH = a^+ b \ov a ^+ \cup \{a^s \ov b \ov a^t\mid s \geq t \geq 1\}$ with $k=1$. }\label{steffen}
\end{figure}

The next result shows the unambiguity of paths in the automaton $\cA$.
It is a crucial property. 

\begin{lemma}\label{unam}
Let $w \in \Sig^*$ be the label of a path in $\cA$ from a bridge 
$A= (P,p_1,p_2,\ell) $ to $A' = (P',p_1',p_2',\ell')$,
then the path is unique. This means that $B=B'$ whenever $w = uv$ and 
\begin{align*}
A &\ras{u}{} B 
  \ras{v}{} A', &&
A \ras{u}{} B' 
  \ras{v}{} A'.
\end{align*}
\end{lemma}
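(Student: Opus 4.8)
The plan is to analyze the structure of a single $a$-transition in $\cA$ and show it is "backwards deterministic" on the relevant components, and then chain this observation along the path. Recall from the construction that for a fixed letter $a$ and a fixed target state $(P\cdot a, q_1, q_2, \ell')$, the source state, if it exists, is completely determined: the $P$-component of the source must be some $P'$ with $P'\cdot a = P\cdot a$ — but here is the subtlety, in a product DFA $P'\cdot a = P\cdot a$ does not force $P' = P$. However, the $q_1$- and $q_2$-components of the source are forced to be $q_1\cdot\ov a$ and $q_2\cdot\ov a$, and the level of the source is forced (it is $\ell'-1$, or $0$ in the case $\ell'\in\{0,1\}$, with the side condition on final states deciding between the first two rules). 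So the only freedom in reconstructing the source from the target plus the letter lies in the $P$-component.

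The key step is therefore to observe that along a path the $P$-component is in fact determined forwards from the \emph{start}, not just backwards from a transition. Concretely: if $A = (P, p_1, p_2, \ell)$ is the initial state of the path and the path reads the word $w = u v$ reaching $A' = (P', p_1', p_2', \ell')$, then after reading the prefix $u$ the $P$-component of the intermediate state is exactly $P \cdot u$, since $a$-transitions act on the $P$-component by $P \mapsto P\cdot a$, deterministically. So the $P$-component of $B$ in the factorization $A\ras{u}{}B\ras{v}{}A'$ equals $P\cdot u$, independently of the path; likewise the $P$-component of $B'$ equals $P\cdot u$. That pins down the first coordinate. It remains to pin down the triple $(q_1, q_2, \ell)$-coordinates of $B$, and for these the argument runs backwards from $A'$ using the transition analysis above: reading $v = a_1\cdots a_m$ backwards from $A'$, the $q_1$-component of the predecessor-by-one-letter is obtained by applying $\ov{a_m}$ to the current $q_1$-component in $\cA_1$ (deterministic), similarly for $q_2$ in $\cA_2$, and the level decreases in a forced way. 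Hence the $(q_1,q_2,\ell)$-part of $B$ is also uniquely determined by $A'$ and $v$. Combining, $B = B'$.

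The cleanest way to write this is probably by induction on $|v|$ (or symmetrically on $|u|$). For the forward part one inducts on $|u|$: the claim "after reading prefix $u$ of $w$ from $A$, the $P$-component of the current state is $P\cdot u$" is immediate from the transition rules. For the backward part one inducts on $|v|$: the claim "reading $v$ backwards from $A'$, the last three coordinates of the state before reading $v$ are a function of $A'$ and $v$ only" follows because each single $a$-transition, as noted, determines the last three coordinates of its source from the last three coordinates of its target together with the label $a$ (using also the final-state side condition to disambiguate the level-$0$ versus level-$1$ rules, and the range $1\le\ell<k$ for the generic rule — these cases are mutually exclusive given the target level, so there is no ambiguity). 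Then uniqueness of $B$ in all four coordinates follows, and since the path itself is the sequence of states visited, uniqueness of the whole path is equivalent to uniqueness of every intermediate state, which is what we have shown (applied to every prefix/suffix split).

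The main obstacle — really the only place one must be careful rather than mechanical — is the $P$-component: a naive "backwards" argument fails there because $\cA_1$ and $\cA_2$, and hence the product $\cQ$, are being traversed forwards for $P$ but the transition on $P$ is $P\mapsto P\cdot a$, which is many-to-one, so one genuinely cannot reconstruct $P$ from $P\cdot a$ and $a$. The resolution, as above, is that the $P$-component need not be reconstructed backwards at all: it is already fixed by the \emph{forward} evolution from the fixed start state $A$, which is deterministic in that coordinate. One should state this split explicitly so the reader sees why the two halves of the state (the product coordinate $P$, versus the pair-of-backwards-coordinates $(q_1,q_2)$ together with the level) are handled by opposite directions of determinism. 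Everything else is a routine unwinding of the three transition rules.
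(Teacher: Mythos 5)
Your proposal is correct and follows essentially the same route as the paper's proof: the $P$-component of the intermediate state is determined forwards from $A$ (since $P\mapsto P\cdot a$ is deterministic), while the $(q_1,q_2)$-components are determined backwards from $A'$ as $p_i'\cdot\ov v$, with the level forced by the transition rules. The only cosmetic difference is that the paper pins down the level forwards from $A$ (using $\ell$ and whether $p_i\in\cF_i$) whereas you recover it backwards from $A'$; both work.
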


\begin{proof}
 It is enough to consider $u = a \in \Sig$. 
 Let $B = (Q,q_1,q_2,m)$. Then we have 
 $Q= P \cdot a$ and $q_i= p_i' \cdot \ov v$.
 If $\ell= 0$ and  $p_i\notin \cF_i$ for $i = 1,2$, then 
 $m= 0$, too; otherwise $m= \ell +1$. Thus, $B$ is determined by 
 $A$, $A'$, and $u$, $v$. We conclude $B=B'$.
\end{proof}

We will now show that the automaton $\cA$ encodes the
hairpin completion in a natural way.
For languages $U$ and $V$ we define the language $V^U$ as follows:
$$V^U  = \set{uv\ov u}{u \in U, \, v \in V}.$$
Clearly, if $U$ and $V$ are regular, then $V^U$ is \lcf, but not regular, in general.
(The notation $V^U$ is adopted {}from group theory where exponentiation denotes conjugation 
and the canonical involution refers to taking inverses.) 

\begin{lemma}\label{lem:str}
For each pair $\tau= (I,F)\in \cI \times \cF$ with $F = ((d_1,d_2),e_1,e_2,k)$
let $R_\tau $ be the (regular) set of words which label a path from the initial bridge 
$I$ to the final bridge $F$, and let $B_\tau = B(d_1,d_2,e_1,e_2)$.

The  \hpc $\ccH$ is a disjoint union 
$$\ccH= \bigcup_{\tau \in \cI \times \cF}B_\tau^{R_\tau}.$$
Moreover, for each word $w\in B_\tau^{R_\tau}$ there is a 
unique factorization $w = \rho\bet\ov\rho$ with $\rho \in R_\tau$ and 
$\bet \in B_\tau$. 
\end{lemma}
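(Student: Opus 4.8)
\textbf{Proof plan for \reflem{lem:str}.}
The strategy is to show the two inclusions between $\ccH$ and $\bigcup_\tau B_\tau^{R_\tau}$ by tracing the layer structure of $\cA$, and then to deduce disjointness and unique factorization from the unambiguity established in \reflem{unam}. First I would fix a word $\gaba\ov\gam \in \ccH$ witnessed, say, by $\gaba \in L_1$ with $\abs\alp = k$ (the case $\abba\ov\gam \in L_2$ is symmetric after passing to $\ov{L_2}$, which is exactly why the input is given via $\cA_2$ for $\ov{L_2}$). The idea is that reading $\gam$ in $\cA$ should drive the first component $P$ forward from $Q_0$ along $\cA_1\times\cA_2$ while the $q_1,q_2$-components run \emph{backwards}, consuming $\ov\gam$ from the accepting end; then reading the length-$k$ block $\alp$ climbs the $k$ levels; the middle $\bet$ lives in the bridge language $B_\tau$ attached to the top-level final state. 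So I would set $(d_1,d_2) = (q_{01},q_{02})\cdot\gam$, let $e_1 = d_1\cdot\alp$, and choose the final state $F = ((d_1,d_2),e_1,e_2,k)$; the bridge condition $e_1\cdot\ov\bet\,$ reaching the right place, together with $\gaba\in L_1$, forces $e_1\cdot\ov\bet\ov\gam$ (read appropriately) to land in $\cF_1$, which is what makes $(d_1,d_2,e_1,e_2)$ a genuine bridge and $\bet\in B_\tau$. The level bookkeeping is the point where I expect to spend the most care: one must check that a path labelled $\gam$ stays on level $0$ using the ``$q_i\cdot\ov a\notin\cF_i$'' arcs (no proper prefix of $\gaba$ that is a word of the form we track prematurely hits a final state from the right), that the transition into level $1$ happens exactly when the $\alp$-block starts, and that the remaining $k-1$ letters of $\alp$ walk levels $1\to k$. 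This is essentially a careful reading of the three families of $a$-transitions in \refsec{secnfa}.

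For the reverse inclusion I would take $w = \rho\bet\ov\rho$ with $\rho\in R_\tau$, $\bet\in B_\tau$, $\tau = (I,F)$, $I = (Q_0,q_1',q_2',0)$, $F=((d_1,d_2),e_1,e_2,k)$, and read off from the path $I\ras{\rho}{}F$ that $\abs\rho\ge k$ (every path from level $0$ to level $k$ has at least $k$ letters, and in fact the last $k$ letters of $\rho$ are the ones on levels $0\to k$ — or rather $\ge 1$ letter on level $0$ and then levels $1,\dots,k$). Writing $\rho = \gam\alp$ with $\abs\alp = k$, the forward action on the $P$-component gives $(q_{01},q_{02})\cdot\gam\alp\bet = $ the first two coordinates consistent with $F$ and the bridge $B_\tau$, and the backward action on the $q_i$-components together with the level-$1$ switch shows that some prefix of the form $\gam\alp$ followed by the $\ov\bet$-backtrack reaches a final state, i.e.\ $\gaba\ov\gam'\in L_1$ for the appropriate $\gam'$; unwinding the reversal convention on the $\cA_2$-side gives the $L_2$-alternative. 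Hence $w\in\ccH$. I would be slightly careful here that the ``$\ge k$ letters'' claim uses that level $0$ contributes at least one letter before any switch to level $1$ is possible and that there are no level-$k$ outgoing arcs, so the decomposition $\rho=\gam\alp$ with $\abs\alp=k$ is forced, not merely allowed.

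Finally, disjointness of the union and uniqueness of the factorization $w=\rho\bet\ov\rho$ both reduce to \reflem{unam}. Suppose $w\in B_\tau^{R_\tau}\cap B_{\tau'}^{R_{\tau'}}$ with factorizations $w=\rho\bet\ov\rho=\rho'\bet'\ov\rho'$. If $\abs\rho=\abs{\rho'}$ then $\rho=\rho'$, $\bet=\bet'$, and the unambiguity lemma applied to the common prefix $\rho$ forces the path from $I$ (resp.\ $I'$) to agree state-by-state, in particular $I=I'$ and $F=F'$, so $\tau=\tau'$. If $\abs\rho\ne\abs{\rho'}$, say $\abs\rho<\abs{\rho'}$, then $\rho$ is a proper prefix of $\rho'$; but $\rho$ already labels a path into level $k$ from an initial state, while $\rho'$ extends it inside $\cA$ — impossible since level $k$ has no outgoing arcs. (The same length-parity argument, matching the prefix $\rho$ against the mirrored suffix $\ov\rho$, rules out the factorizations overlapping in the middle.) This is the cleanest part of the argument; the genuine obstacle is the level-counting in the first two paragraphs, i.e.\ proving that the ad hoc level-switching rule of $\cA$ really tracks ``the boundary of the $\gam$-block'' and nothing else, which forces one to argue that a word of the tracked form cannot hit $\cF_1$ or $\cF_2$ from the right before position $\abs\gam$.
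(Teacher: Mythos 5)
Your overall architecture (two inclusions plus a uniqueness argument) is the right one, but the two steps you yourself flag as delicate are exactly where the proof lives, and as sketched they do not go through. For the inclusion $\ccH \sse \bigcup_\tau B_\tau^{R_\tau}$ you take \emph{some} witnessing factorization $w=\gaba\ov\gam$ and then say one ``must check'' that the backward components never hit $\cF_1$ or $\cF_2$ before position $\abs\gam$. For an arbitrary witness this is simply false: nothing prevents a shorter prefix of $w$ of the required shape from already lying in $L_1$ (or a shorter suffix in $L_2$), and then the level-$0$ arcs of $\cA$ needed for your intended path do not exist. The missing idea is a normalization: among all factorizations witnessing $w\in\ccH$ one must choose the one with $\abs\gam$ \emph{minimal}, equivalently take $\gaba$ to be the longest prefix of $w$ in $L_1$ or $\abag$ the longest suffix in $L_2$. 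A short computation (using that for every $j'\leq\abs\gam$ the last $j'+k$ letters of $w$ are automatically the mirror image of the first $j'+k$ letters, so any final state met too early yields a legal witness with smaller $\gam$) shows that minimality of $\abs\gam$ is equivalent to the conjunction ``$f_1\RAS{\ov\gam}q_1'$ and $f_2\RAS{\ov\gam}q_2'$ and ($f_1\in\cF_1$ or $f_2\in\cF_2$)'', which is precisely what the arcs of $\cA$ demand; this is the content of the paper's Figures~\ref{firstrun} and~\ref{urun}.

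Second, your disjointness argument in the case $\abs\rho\neq\abs{\rho'}$ rests on a false premise. The paths $I\ras{\rho}F$ and $I'\ras{\rho'}F'$ start at \emph{different} initial bridges (their backward components differ), so after reading the prefix $\rho$ of $\rho'$ the second path need not be in the same state as the first one ends in --- in particular it need not be at level $k$ --- and ``level $k$ has no outgoing arcs'' yields no contradiction. (\reflem{unam} cannot repair this: it presupposes that \emph{both} endpoints of the two paths coincide.) The correct argument again comes from the intrinsic characterization above: for any accepting factorization of $w$, the backward component after $i$ letters equals $q_{01}\cdot w(1)\cdots w(\abs w-i)$ and is therefore determined by $w$ and $i$ alone; the level switch of the $\rho$-path forces this state (or its $\cA_2$-analogue) to be final at $i=\abs\gam$, while the level-$0$ arcs of the $\rho'$-path force it to be non-final at every $i<\abs{\gam'}$, a contradiction when $\abs\gam<\abs{\gam'}$. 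Once the split position is unique, $F$ is recovered from $\rho$ and $\bet$ by determinism of $\cA_1,\cA_2$, and $I$ from $F$ and $\ov\rho$, so $\tau$ is unique; your equal-length case then closes (though via determinism of the underlying DFAs rather than \reflem{unam}). Your reverse inclusion is essentially fine.
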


\begin{proof}
Let $w\in\ccH$. There exists  some factorization $w=\gabag$ such that $\abs\alp=k$
and there are runs as in \reffig{firstrun} in the original DFAs $\cA_1$ and $\cA_2$
where $f_1'\in\cF_1$ or $f_2'\in\cF_2$ (or both): 
\begin{figure}
\begin{align*}
  L_1: \quad& q_{01} \ras{\gam} c_1' \ras{\alp} d_1' \ras {\bet }{}
    e_1' \ras {\ov \alp}{} f_1' \ras {\ov \gam }{} q_1', \\
  \ov{L_2}: \quad& q_{02} \ras{\gam} c_2' \ras{\alp} d_2'\ras {\ov \bet }{}
    e_2' \ras {\ov \alp}{} f_2' \ras {\ov \gam}{} q_2'
\end{align*}
\caption{Some run defined by $w\in \ccH$}
\label{firstrun}
\end{figure}

Choosing among all these runs the length $\abs{\ov \gam}$ to be minimal, we see that 
we actually find the following picture according to \reffig{urun}.
In other words, either $\gaba$ is the longest prefix of $w$ belonging to $L_1$ or
$\abag$ is the longest suffix of $w$
belonging to $L_2$, or both. 
The difference to the precedent figure is is that between $f_i$ and $q'_i$ ($i= 1,2$) we never enter a final state. 
\begin{figure}
 \begin{align*}
  L_1: \quad& q_{01} \ras{\gam} c_1 \ras{\alp} d_1 \ras {\bet }{}
    e_1 \ras {\ov \alp}{} f_1 \RAS {\ov \gam }{} q_1', \\
  \ov{L_2}: \quad& q_{02} \ras{\gam} c_2 \ras{\alp} d_2\ras {\ov \bet }{}
    e_2 \ras {\ov \alp}{} f_2 \RAS {\ov \gam}{} q_2'
\end{align*}
\caption{The unique run defined by $w\in \ccH$ with $\abs{\ov \gam}$ minimal}
\label{urun}
\end{figure}
By the definition 
of the NFA $\cA$ we see that $\rho = \gam  \alp$ is the unique prefix of $w$ such that
$w = \rho\bet\ov\rho$ with $\rho \in R_\tau$ and 
$\bet \in B_\tau$ for some $\tau$. 
Now, as the length $\abs{\ov \gam}$ is fixed by $w$, we see that all states 
$c_i$, $d_i$, $e_i$, $f_i$, and $q_i'$ are uniquely defined by $w$ for $i= 1,2$. 
Thus, there is a unique $\tau \in \cI \times \cF$ with 
$w \in B_\tau^{R_\tau}.$ More precisely, we have: 
$$\tau = (((q_{01}, q_{02}), q_{1}', q_{2}',0), \, ((d_1, d_2), e_1, e_2,k)).$$ 
\end{proof}

\subsection{First Tests}\label{test}
By construction, the automaton $\cA$ accepts the union of the languages $R_\tau$ as defined
in Lemma~\ref{lem:str}. If the accepted language is finite then all
$R_\tau$ are finite and hence all $B_\tau^{R_\tau}$ are regular. 
This leads immediately to the following result: 

\begin{proposition}\label{prop:onesided}
It can be decided in \NL whether or not the accepted language of the  NFA $\cA$ is finite.
If 
the accepted language is finite, then the \hpc $\ccH$ is regular.
\end{proposition}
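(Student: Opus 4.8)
The plan is to reduce the statement to the standard fact that finiteness of the language accepted by an NFA is decidable in \NL. The automaton $\cA$ has $\Oh(n^4)$ states and, by Remark~\ref{rem:prec}, it is available on the input tape (produced by a \svlstd). By Lemma~\ref{lem:svlstd} we may assume $\cA$ is given explicitly, so it suffices to decide finiteness of $L(\cA)$ in \NL with $\cA$ as input, and this will then give an \NL-procedure overall.

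First I would recall the classical characterization: $L(\cA)$ is infinite \IFF there is a state $s$ of $\cA$ that is reachable from some initial state, co-reachable to some final state, and lies on a directed cycle. The complement — $L(\cA)$ is finite — is therefore expressed as: for every state $s$, \emph{if} $s$ is both reachable from $\cI$ and co-reachable to $\cF$, then $s$ is not on any cycle. Each of the three primitive predicates (reachability from $\cI$, co-reachability to $\cF$, "lies on a cycle'') is an $st$-reachability question in the directed graph underlying $\cA$, possibly after adding a source/sink or a split copy of $s$ to force the path to be a nontrivial closed walk. Thus each is in \NL, and since \NL is closed under complementation, the "not on a cycle'' predicate is in \NL as well. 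The full statement "$L(\cA)$ is finite'' is a conjunction over the $\Oh(n^4)$ states of an \NL-predicate; a log-space machine can iterate over all states using $\Oh(\log n)$ bits for a counter and, for each state, run the \NL-subroutine, accepting only if all checks pass. Hence finiteness of $L(\cA)$, and therefore the decision in the proposition, is in \NL.

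For the second assertion I would invoke Lemma~\ref{lem:str}: $\ccH = \bigcup_{\tau \in \cI \times \cF} B_\tau^{R_\tau}$, where $R_\tau$ is the set of labels of paths in $\cA$ from the initial bridge to the final bridge of $\tau$. If $L(\cA)$ is finite then every $R_\tau \subseteq L(\cA)$ is finite; a finite union over $\tau$ (the index set $\cI \times \cF$ is finite) of finitely many sets $B_\tau^{R_\tau}$ with $R_\tau$ finite is then itself a finite union of sets of the form $\set{\rho\bet\ov\rho}{\bet \in B_\tau}$ for fixed $\rho$. Each such set is a regular-language substitution $\rho B_\tau \ov\rho$ with $B_\tau$ regular (by definition $B_\tau = B(d_1,d_2,e_1,e_2)$ is regular), hence regular; and regular languages are closed under finite union. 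Therefore $\ccH$ is regular.

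The main obstacle is purely a matter of being careful with the \NL machinery rather than anything combinatorial: one must check that the "lies on a cycle'' test is genuinely an $st$-reachability instance (the standard trick is to look for a path of length $\geq 1$ from $s$ back to $s$, e.g. by checking whether $s \cdot a$ can reach $s$ for some letter $a$), and that the quantifier structure "$\forall s\,(\varphi(s)\to\psi(s))$'' can be evaluated in log space by serially reusing the work tape for each $s$ while keeping only the loop counter — which it can, since \NL is closed under complement and under the relevant bounded conjunctions. I would present these reductions explicitly but briefly, citing \cite{pap94} for the closure properties and the \NL-membership of graph reachability.
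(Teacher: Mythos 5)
Your proposal is correct and follows essentially the same route as the paper: finiteness of $L(\cA)$ reduces to detecting a useful state on a nontrivial cycle (equivalently, a path from $\cI$ to $\cF$ revisiting a state), which is in \NL by graph reachability plus closure of \NL under complementation, and the regularity claim is exactly the observation preceding the proposition that finite $R_\tau$ makes each $B_\tau^{R_\tau}$ a finite union of regular sets $\rho B_\tau \ov\rho$, combined with Lemma~\ref{lem:str}. The extra care you take with the quantifier structure and the "lies on a cycle'' test is fine but not needed beyond what the paper's one-line argument already provides.
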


\begin{proof}
 To see that the accepted language is infinite it is enough to guess a path 
 from an initial state to final one which uses some (guessed)  state at least twice.
 Since \NL is closed under complementation the finiteness test is  possible in \NL, too. The second assertion follows {}from Lemma~\ref{lem:str}.
\end{proof}

We check this property (although strictly speaking Test~0  is redundant): 

\begin{test}{0}
	Decide in \NL whether or not $L(\cA)$ is finite. 
	If it is finite, then stop with the output that $\ccH$ is regular. 
\end{test}

For convenience we may assume in the following that $\cA$ accepts an infinite
language and that all states are reachable from an initial bridge
and lead to some final bridge.

For sake of completeness let us state another result which shows that
deciding regularity of the one-sided \hpc is somewhat easier, because 
 the finiteness condition is also necessary in this case.  
However, as we neither use this result nor does it change the \NL-completeness of the problem, we leave the
proof of \refprop{prop:trueonesided} to the interested reader. 
\begin{proposition}\label{prop:trueonesided}
If $L_1$ or $L_2$ is finite, but the accepted language of $\cA$  is infinite,
then the \hpc $\ccH$ is not regular. 
\end{proposition}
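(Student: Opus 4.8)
The plan is to show the contrapositive-flavoured statement directly: assuming $L_1$ or $L_2$ is finite (say $L_2$ is finite, the other case being symmetric via the bar involution) and that $L(\cA)$ is infinite, I would exhibit a concrete family of words witnessing non-regularity of $\ccH$ by a pumping argument. By \reflem{lem:str} we know $\ccH = \bigcup_\tau B_\tau^{R_\tau}$, and because $L(\cA)$ is infinite, some $R_\tau$ is infinite; fix such a $\tau = (I,F)$ with $F = ((d_1,d_2),e_1,e_2,k)$, and fix some $\beta_0 \in B_\tau$ (nonempty since $\tau$ is a bridge). Since $R_\tau$ is an infinite regular language accepted along paths of $\cA$, there is a pumpable factorization: words $\rho = x y^m z \in R_\tau$ for all $m \geq 0$, with $y \neq \e$. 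Each such $\rho$ gives $w_m = x y^m z\, \beta_0\, \ov{z}\,\ov{y}^m\,\ov{x} \in \ccH$.

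The heart of the argument is to show these $w_m$ cannot all lie in a regular language unless $\ccH$ fails to be regular, and for this I would use the finiteness of $L_2$. The key point: because $L_2$ is finite, there is a global length bound $N$ on the ``$L_2$-witnessed'' part of any element of $\ccH$; concretely, if $w = \gam\alp\bet\ov\alp\ov\gam \in \ccH$ arises because $\alp\bet\ov\alp\ov\gam \in L_2$ (equivalently $\gam\alp\bet\ov\alp \in \ov{L_2}$ read appropriately), then $|\gam\alp\bet\ov\alp\ov\gam| \le N + |\gam|$ is still unbounded, so I must instead argue that for $m$ large the word $w_m$ is forced to be witnessed by $L_1$ only — i.e. $\gam\alp\bet\ov\alp \in L_1$ — because the $L_2$-side cannot accept arbitrarily long $\alp\bet\ov\alp\ov\gam$. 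Then I would run the classical non-regularity argument for the hairpin (conjugacy) language $B_\tau^{R_\tau}$ restricted to the $L_1$-witnessed case: intersect $\ccH$ with the regular language $x y^* z\, \beta_0\, \ov z\, \ov y^*\, \ov x$ and observe that, by the uniqueness of factorization in \reflem{lem:str} and the pumping structure, membership of $x y^i z \beta_0 \ov z \ov y^j \ov x$ in $\ccH$ forces $i = j$ (for $i,j$ large), so this intersection is essentially $\set{u^i v u^i}{i \ge 0}$-like and hence not regular; since regular languages are closed under intersection, $\ccH$ is not regular.

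For the actual forcing of $i=j$ I would rely on \reflem{unam} (unambiguity of paths in $\cA$) together with \reflem{lem:str}: a word in the above intersection has at most one factorization $\rho\beta\ov\rho$ with $\rho \in R_{\tau'}$, $\beta \in B_{\tau'}$, and the pumped word $x y^i z$ is a prefix of a word in $R_{\tau'}$ only for a ``diagonal'' set of $(i,j)$, because the suffix $\ov{\rho}$ is determined as $\ov{x y^j z}$ and the longest-prefix / minimal-$|\ov\gam|$ normalization in the proof of \reflem{lem:str} pins down $\gam\alp$ and hence $j$ in terms of $i$; finiteness of $L_2$ removes the alternative $L_2$-witnessed factorizations for large $i$. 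I expect the main obstacle to be exactly this bookkeeping step — ruling out that some \emph{other} $\tau'$, or the $L_2$-side, could ``rescue'' the off-diagonal words $w_{i,j}$ with $i \neq j$ — which is why the hypothesis that one of $L_1,L_2$ is finite is essential and why, as the authors remark, the general (two-sided) case genuinely requires the more elaborate machinery developed in the rest of the paper.
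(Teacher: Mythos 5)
The first thing to say is that the paper contains no proof of this proposition: the authors explicitly leave it to the interested reader, so your attempt can only be measured against the argument that the surrounding lemmas make available. Your overall strategy is the right one: reduce by symmetry to the case that $L_2$ is finite; observe that every word of $\ccH$ witnessed by the $L_2$-side has length at most $2N$, where $N$ bounds the lengths of words in $L_2$ (your worry that ``$N+\abs\gam$ is still unbounded'' is unfounded, since $\abs\gam=\abs{\ov\gam}\le\abs{\alp\bet\ov\alp\ov\gam}\le N$); pump a loop $A\ras{v}A$ of $\cA$ to obtain words $w_{i,j}=xy^iz\,\beta_0\,\ov z\,\ov y^{\,j}\ov x$ with $w_{i,i}\in\ccH$ for all $i$; and derive a contradiction with regularity from the behaviour of the off-diagonal words.

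The gap is exactly the step you flag as the ``main obstacle'': you never prove that the off-diagonal words lie outside $\ccH$, and the statement you propose to prove --- that membership forces $i=j$ --- is false. The paper's own example (Figure~\ref{steffen}) produces the component $\set{a^s\ov b\,\ov a^t}{s\ge t\ge 1}$ of $\ccH$, so in general all pairs with $i\ge j$ can survive; what is true, and what suffices for the pumping contradiction, is only that $w_{i,j}\notin\ccH$ once $(j-i)\abs y>\abs{z\beta_0\ov z}$ and $i,j$ are large. Moreover, the tools you cite cannot deliver even that: \reflem{lem:str} and \reflem{unam} assert uniqueness of the canonical factorization of words that \emph{are} in $\ccH$; they say nothing about which words \emph{fail} to be in $\ccH$. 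The missing ingredient is the level-$0$ condition built into $\cA$ (the $\RAS{}{}$-paths): along the accepting run of $xy^jz$ the backward states avoid $\cF_1\cup\cF_2$, which translates into the statement that no prefix of $w_{i,j}$ properly extending $xy^iz\beta_0\ov z$ belongs to $L_1$; since for $j\gg i$ any $L_1$-witness would have to be such a prefix (an $L_2$-witness being excluded by length), this is what actually kills $w_{i,j}$. That computation is precisely the one carried out inside the proof of \reflem{lem:loop} (``the longest prefix belonging to $L_1$ is too short to create the hairpin completion''), which you do not invoke. Indeed, the shortest correct proof is to rerun the first half of that proof: assuming $\ccH$ regular and $L(\cA)$ infinite, it yields a word $uv^sw\beta\ov w\,\ov v^{s+t}\ov u\in\ccH$ that is forced to be witnessed by a suffix in $L_2$ of length greater than $s\abs v$; letting $s\to\infty$ contradicts the finiteness of $L_2$, and the case of finite $L_1$ is symmetric. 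Your proposal circles this argument but does not contain it.
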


Let $K$ be the set of non-trivial strongly connected components of the automaton $\cA$ 
(read as a directed graph). Every non-trivial strongly connected component is 
on level 0 and, moreover, as $\cA$ accepts an infinite
language, there is at least one. For $\kappa \in K$ let $N_\kappa$
be the number of states in the component $\kappa$. We have $N_\kappa = \abs \kappa \leq n^4$.

The next lemma tells us that for a regular \hpc $\ccH$ every
strongly connected component $\kappa\in K$ is a simple cycle.

\begin{lemma}\label{lem:loop}
	Let the hairpin completion \ccH be regular,
	$A \ras {v_A} A$  be a path in a strongly connected component $\kappa$ with $1\leq \abs {v_A} \leq N_\kappa$,
	and let $A\ras w F$ be a path in $\cA$ {}from $A$ to a final bridge $F$.
	Then the word $w$ is a prefix of some word in $v_A^+$.

	In addition, the  word $v_A$ is uniquely defined by the conditions  $A\ras {v_A} A$ and $1\leq \abs v_A\leq N_\kappa$.
	The loop $A\ras {v_A} A$ visits every other state $B\in\kappa$ exactly once.
	Thus it builds a Hamiltonian cycle of $\kappa$ and $\abs{v_A} = N_\kappa$.
\end{lemma}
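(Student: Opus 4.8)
The plan is to exploit Lemma~\ref{lem:str} and the unambiguity lemma (Lemma~\ref{unam}) to argue that a non-simple cycle, or a cycle that interacts with the rest of the automaton in a non-trivial way, would force \ccH to contain a sublanguage of the shape $\{x^i y x^j \ov{?}\}$ with two independent unbounded parameters, contradicting regularity. Concretely, I would first fix a non-trivial strongly connected component $\kappa$ (which exists and lies on level $0$ because $L(\cA)$ is infinite), pick a state $A\in\kappa$, and consider a shortest nonempty closed walk $A \ras{v_A} A$ inside $\kappa$, so $1\le\abs{v_A}\le N_\kappa$. Via Lemma~\ref{lem:str}, any accepting path through $A$ decomposes the generated word as $\rho\bet\ov\rho$ with $\rho\in R_\tau$, $\bet\in B_\tau$; pumping the $v_A$-loop $m$ times produces, for each $m$, a word whose $\rho$-part grows like a prefix of $v_A^m$ times a fixed tail $w$.

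The first key step is to show that the tail word $w$ labelling a path $A\ras{w}F$ from $A$ to a final bridge $F$ must be a prefix of some power $v_A^{t}$. I would argue this by contradiction using Lemma~\ref{lem:str}: if $w$ were \emph{not} a prefix of any word in $v_A^+$, then the words $u_m := \rho_0\,v_A^m\,w$ (where $\rho_0$ labels a path from an initial bridge into $A$) give, for every $m$, accepting paths, hence elements $\gamma_m\alpha\beta\ov\alpha\ov{\gamma_m}\in\ccH$ whose left half $\gamma_m\alpha = \rho_0 v_A^m w$ has a mismatch position (relative to $v_A$-periodicity) at a bounded distance from the right end — but the right half $\ov{\gamma_m}$ is the exact mirror, so the whole word $u_m\bet\ov{u_m}$ has two "defect" regions whose separation grows linearly in $m$ while their internal structure is fixed. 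Feeding such a family into a Myhill--Nerode / pumping argument for regular languages yields infinitely many inequivalent classes, contradicting regularity of \ccH. This is essentially the standard "$a^i b^j$ with $i,j$ linked only by an inequality is non-regular" phenomenon transported through the hairpin structure, and it is the step I expect to be the main obstacle, because one must be careful that the constructed words genuinely lie in \ccH (the bridge $B_\tau$ must remain nonempty, which it does since $F$ is fixed) and that the non-periodicity of $w$ really survives mirroring — here the key point is that $\ov\rho$ appears \emph{reversed and barred}, so a periodicity defect in $\rho$ becomes a periodicity defect in $\ov\rho$ at the mirrored location, and the two cannot be absorbed into one another.

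Once the prefix-of-$v_A^+$ property is established, uniqueness of $v_A$ follows quickly: any two nonempty closed walks at $A$ of length $\le N_\kappa$ are both prefixes of powers of each other (apply the property with $F$ replaced by suitable states, or directly: each is a prefix of a power of the other), so by the Fine--Wilf theorem they are powers of a common word, and minimality of the length forces them to be equal — in particular $v_A$ is primitive. Finally, to see that the loop $A\ras{v_A}A$ visits every state of $\kappa$ exactly once: by Lemma~\ref{unam} the path from $A$ back to $A$ labelled $v_A$ is the unique such path, so it is a simple cycle through $A$; and if some state $B\in\kappa$ were \emph{not} on it, then, taking a shortest closed walk $B\ras{v_B}B$, the same argument gives a primitive $v_B$, and a detour from the $v_A$-cycle through $B$ and back would produce a closed walk at $A$ whose label is not a prefix of a power of $v_A$ (the $B$-excursion breaks the period), contradicting the prefix property applied along a path that first loops at $A$, detours, and then proceeds to a final bridge. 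Hence $\kappa$ is a single simple cycle — a Hamiltonian cycle of itself — and $\abs{v_A}=N_\kappa$, which is what the lemma asserts.
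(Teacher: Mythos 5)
There is a genuine gap at precisely the step you flag as ``the main obstacle'', and the two sub-issues you then discuss (nonemptiness of the bridge language, survival of the defect under mirroring) are not where the difficulty lies. Your Myhill--Nerode argument needs, for $t$ large, that the word $W = uv_A^s w \bet\, \ov{w}\, \ov{v_A}^{\,s+t}\ov{u}$ is \emph{not} in \ccH when $w$ is not a prefix of a word in $v_A^+$; without exhibiting concrete non-members the family $\oneset{u_m\bet\ov{u_m}}$ distinguishes nothing, since a regular language may perfectly well contain all words with two ``defect regions'' at growing separation. Proving non-membership means ruling out \emph{every} hairpin decomposition $W=\gamma\alpha\beta'\ov\alpha\ov\gamma$, not only the one inherited from the accepting path. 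Concretely one must: (i) use the level structure of $\cA$ (after the final state $f_1$ the run $\RAS{\ov\gamma}{}$ never re-enters $\cF_1$, and symmetrically for $\cA_2$) to bound the longest prefix of $W$ in $L_1$ and the longest suffix in $L_2$; (ii) conclude from $t>\abs{w\bet}$ that only an $L_2$-suffix can realize the hairpin; and (iii) observe that then $\gamma\alpha$ is a common prefix of $W$ and of $\ov W$ of length at least $\abs{uv_A^sw}$, which is exactly $uv_A^sw\leq uv_A^{s+t}w$, i.e.\ $w\leq v_A^{t}$. This chain is the actual content of the first statement; the paper proves it in the equivalent direct form (regularity gives $W\in\ccH$ by pumping, and (i)--(iii) then force the prefix property), whereas your ``feeding the family into Myhill--Nerode'' merely restates the conclusion.

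The second half also goes wrong at the Hamiltonicity step. Your claim that a detour $A\ras{v'}B\ras{v''}A$ through a state $B$ off the $v_A$-cycle ``breaks the period'' is unjustified and in general false: the label $v'v''$ can perfectly well be (a prefix of) a power of $v_A$. What pins $B$ to the cycle is not aperiodicity of labels but uniqueness of paths: extend both loops by a path to a final bridge, apply the first statement to get $(v'v'')^j=v_A^i$ for exponents with $\abs{(v'v'')^j}=\abs{v_A^i}$, and then invoke Lemma~\ref{unam} to conclude that the closed walk labelled $(v'v'')^j$ at $A$ \emph{is} the $v_A$-loop iterated, so $B$ lies on it; this simultaneously gives $\abs{v_A}\geq N_\kappa$ and the uniqueness claim. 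Your Fine--Wilf step has the same soft spot: ``each is a prefix of a power of the other'' does not by itself yield equality (consider $ab$ and $aba$); again it is the length-matching plus path-uniqueness argument that closes the case.
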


\begin{proof}
Let  $A\ras{v}{}A$ be some  non-trivial loop. We see that $A$ is on level zero. 
Consider a path labeled by $w$ from $A$ to a final bridge $F=((p_1,p_2),q_1,q_2,k)$.
By assumption,  all states in $\cA$ are reachable from some initial state. Thus,  we find 
a word $u$ such that the automaton $\cA$  accepts $uv^i w$ for all $i \geq 0$.
We see next that 
$uv^i w \bet \ov w \ov v^i \ov u \in \ccH$ for all $i \geq 0$
and all $\bet \in  B(p_1,p_2,q_1,q_2)$.
As $\ccH$ is regular, there are $s,t\in \N $ with $uv^s w \bet \ov w \ov v^{s+t} \ov u \in \ccH$
and $t > \abs{w \bet}$, by pumping.
This means that the 
\hpc is forced to use  a suffix in $L_2$, because the 
longest prefix belonging to $L_1$ is too short to create the \hpc.  Due to the definition of $\cA$ we conclude 
that $uv^s w$ must be a prefix of  $uv^{s+t} w$. This implies that $w$ is a prefix of  $v^{t}$
and thus the first statement of our lemma.

Let $v_A$ be some shortest word such that $A\ras {v_A} A$.
Observe first that $\abs{v_A} \leq N_\kappa$.
Now, let $A\neq B \in \kappa$ and $A\ras{v'}{} B\ras{v''}{}A$. 
For some $i, j >0$ we have $\abs {v_A^i} = \abs {(v'v'')^j}$. Thus, $v_A^i = (v'v'')^j$
by the first statement.
By the unique-path-property stated in Lemma~\ref{unam} we obtain that
the loop $A\ras{(v'v'')^j}A$ just uses the shortest loop $A\ras{v_A}A$ several times.
In particular, $B$ is on the shortest loop around $A$. This yields $\abs {v_A} \geq N_\kappa$
and hence the second statement.
\end{proof}

\begin{example}
	In the example given in Figure~\ref{steffen} the state $(Q_0,\dead_1,\dead_2,0)$
	forms the only strongly connected component and the corresponding path is labeled with $a$.
	As one can easily observe the automaton $\cA$ satisfies the properties stated in Lemma~\ref{lem:loop}
	(even though the \hpc is not regular).
\end{example}

Due to the technique of \svlstd{}s we may assume that the set of non-trivial strongly connected components $K$ is part of the input. Moreover, 
for each state $A$ and $\kappa \in K$ we know whether 
or not $A\in \kappa$, and we know $N_\kappa = \abs \kappa$.

The next test tries to falsify the property of \reflem{lem:loop}. 
Hence it gives a sufficient condition that $\ccH$ is not regular. 

\begin{test}{1}
	Guess some  state $A$ and $\kappa \in K$ with $A \in \kappa$, a letter $a\in \Sig$,
	and a position $1\leq m\leq N_\kappa$ such that:
	\begin{enumerate}[1.)]
		\item There is a path $A\ras v A$ where $m \leq \abs v\leq N_\kappa$ and $v(m) = a$.
		\item There is a path $A\ras w F$ where $w(i\cdot\abs{v}+m) \neq a$ for some $i\in \N$ with $ 1 \leq i\cdot\abs{v}+m \leq \abs w$. 
	\end{enumerate}
	If such a triple $(A,a,m)$ exists, then  output that \ccH is not regular.
\end{test}

The correctness of Test~1 follows by Lemma~\ref{lem:loop} and,
because for the existence of paths 1.)\ and 2.) we only have to remember the triple $(A,a,m)$,
Test~1 can be performed in \NL.

\begin{remark}\label{rem:test1}
	We can perform Test~1 in \NL and in case  it yields that
	the hairpin completion \ccH is not regular, we can stop.
	Henceforth, we assume that the algorithm did not stop during
	Test~1 and that every strongly connected component $\kappa\in K$ is a simple cycle.
	Performing another \svlstd we may assume that for each $A \in \kappa$
	the word $v_A$ is attached to $A$ and each $v_A$ is part of the input. 
\end{remark}

\subsection{Second and Third Test}\label{test2}
We fix a bridge $A = ((p_1,p_2),q_1,q_2)$ in a strongly connected component.
We let $v = v_A$ as defined in Lemma~\ref{lem:loop} and
let $\alp$ be the prefix of length $k$ of some long enough word in $v^+$.
(By \refrem{rem:test1} the word $v$ is written in plain form on the input tape.) 
By $u$ we denote some word leading from an initial bridge to $A$.
(The \NL algorithm does not know $u$, but it knows that it exists.) 
The main idea is to investigate runs through the DFAs for $L_1$ and
$\ov{L_2}$ where $s,t\geq n$ according to \reffig{otto}.
Recall that $n$ refers to the original input size, thus $n \geq \abs{\cQ_i}$ for $ i = 1,2$. 

\begin{figure}[h]
\begin{alignat*}{3}
  &L_1: &\quad&q_{01} \ras{u}{} p_1 \ras{v^s}{} p_1 \ras{x}{} c_1 \ras{y}{} d_1
    \ras{\ov \alp}{} e_1 \RAS{\ov v^{n}}{} &&q_1 \RAS{\ov v^*}{} q_1 \RAS{\ov u}{} q_1' \\
  &\ov{L_2}: &&q_{02} \ras{u}{} p_2 \ras{v^t}{} p_2 \ras{\alp}{} c_2 \ras{\ov y}{} d_2
    \ras{\ov x} e_2 \RAS{\ov v^{n}}{} &&q_2 \RAS{\ov v^*}{} q_2 \RAS{\ov u}{} q_2'
\end{alignat*}
\caption{Runs through $\cA_1$ and $\cA_2$ based on the loop
$A\xrightarrow{\mspace{5mu}v\mspace{5mu}} A$}
\label{otto}
\end{figure}

We investigate the case where $uv^s xy\ov\alp \ov v^t\ov u\in\ccH$ for all $s\geq t$ and
where (by symmetry) this property is due to the longest prefix belonging to $L_1$
(hence $e_1\in\cF_1$).

The following lemma is rather technical. The notations are however
chosen to fit exactly to \reffig{otto}. 

\begin{lemma}\label{lem:egil}
Let $x, y \in \Sig^*$ be words and $(d_1,d_2) \in \cQ_1 \times \cQ_2$ with
the following properties: 
\begin{enumerate}[1.)]
	\item $\alp \leq x$ and $x < v\alp$.
	\item $y\in B(c_1,c_2, d_1,d_2)$, where
		$c_1 = p_1 \cdot x$ and $c_2 = p_2 \cdot \alp$, and
		$x$ is the longest common prefix of $xy$ and $v\alp$.
	\item $e_1 = d_1\cdot \ov\alp \in \cF_1$ is a final state,
		$q_1 = e_1 \cdot  \ov {v}^n$, and during
		the computation of $e_1 \cdot  \ov {v}^n$ we do not enter a final state in $\cF_1$.
	\item $e_2 = d_2 \cdot \ov{x}$ and $q_2 = e_2 \cdot \ov {v}^n$. Moreover, 
		during the computation of $e_2 \cdot \ov{v}^n$ we do not enter a final state in $\cF_2$
		(but $e_2\in\cF_2$ is possible).
\end{enumerate} 
If $\ccH$ is regular, then there exists a factorization $xy\ov \alp \ov v = \mu \del\bet \ov \del \ov \mu$ where
$\abs\del = k$ and $p_2\cdot \mu\del\ov\bet\ov\del\in\cF_2$
(which implies $\del\bet\ov\del\ov\mu\ov v^*\ov u \sse L_2$). 
\end{lemma}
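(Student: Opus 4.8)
The goal is to show that if $\ccH$ is regular, then the word $w_0 := xy\ov\alp\ov v$ admits a hairpin-type factorization $w_0 = \mu\del\bet\ov\del\ov\mu$ with $\abs\del = k$ that is \emph{witnessed on the $L_2$-side}, i.e. $p_2\cdot\mu\del\ov\bet\ov\del\in\cF_2$. The strategy is to produce an infinite family of words in $\ccH$ whose only possible hairpin factorizations are forced, for large enough parameters, to be realized via the longest suffix in $L_2$ rather than the longest prefix in $L_1$; then a pumping argument on the regular language $\ccH$ extracts the desired factorization of a bounded word, which after removing the pumped parts is exactly $w_0$ up to the context $\mu = uv^s\cdots$ and the tail $\ov v^*\ov u$.

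\textbf{Step 1: membership.} First I would verify, using the runs depicted in \reffig{otto} together with hypotheses 1.)--4.), that $w_{s,t} := uv^s xy\ov\alp\ov v^t\ov u \in \ccH$ for all $s \geq t \geq n$. Indeed, hypotheses 2.) and 3.) give a run of $\cA_1$ ending in the final state $e_1 \in \cF_1$ after reading $uv^s xy\ov\alp$, and then $\RAS{\ov v^{t-n}}{}\RAS{\ov v^n}{}\RAS{\ov u}{}$ completes it without re-entering $\cF_1$; so $uv^s xy\ov\alp\,\ov v^{t}\,\ov u$ has the form $\gam\alp'\bet'\ov{\alp'}\ov\gam$ with $\abs{\alp'} = k$ (taking $\alp' = \alp$, $\gam = uv^s x'$ where $x = x'\alp$... one has to align the $k$-block correctly using hypothesis 1.), which says $\alp \leq x < v\alp$, so the length-$k$ suffix of $\gam\alp'$ sits at the right place). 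This puts $w_{s,t}$ in $\ccH$ for the stated range.

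\textbf{Step 2: forcing the $L_2$-side by pumping.} Since $\ccH$ is regular, apply the pumping lemma to the family $\{w_{s,t}\}$, pumping the $\ov v^t$ block: choose $s$ large and $t$ with $t \gg \abs{uv^s xy\ov\alp}$, so that the pumping produces $w_{s,t'} \in \ccH$ with $t' > t$ and $t' - t$ large. Now run the argument from the proof of \reflem{lem:loop}: in $w_{s,t'}$ the longest prefix belonging to $L_1$ is too short (it is bounded by something like $\abs{uv^s xy\ov\alp\ov v^n}$, since by hypothesis 3.) we cannot extend it through the $\ov v$'s without hitting $\cF_1$ again — and by \reflem{lem:loop} every strongly connected component is a simple cycle, so the $L_1$-run through $v^s$ is rigid), whereas a valid hairpin factorization of $w_{s,t'}$ must have its $\ov\gam$-tail reach down to the $\ov u$ part. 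Hence the factorization of $w_{s,t'}$ is forced to be realized via a suffix in $L_2$. Reading this off in $\cA_2$: the $L_2$-witness means $p_2\cdot(\text{some prefix})\cdot(\text{reversed }\bet\text{-block})\cdot(\text{reversed }\del\text{-block}) \in \cF_2$. Because the $v^t$-part of the run is again rigid (simple cycle, unique $v = v_A$ by \refrem{rem:test1}), the interesting part of this factorization lives in the window $xy\ov\alp\ov v$, and the $\mu$-prefix absorbs the rigid $uv^s\cdots$-context while the $\ov v^*\ov u$ tail follows for free (this is the parenthetical consequence $\del\bet\ov\del\ov\mu\ov v^*\ov u \sse L_2$).

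\textbf{Step 3: extracting the bounded factorization and the main obstacle.} The factorization of $w_{s,t'}$ has the form $\mu'\del\bet\ov\del\ov{\mu'}$ with $\abs\del = k$; I then argue that after stripping the rigid pumpable context one gets a factorization of $xy\ov\alp\ov v$ itself of the claimed shape. The main obstacle, and where I expect to spend the real work, is Step 2's \emph{rigidity bookkeeping}: one must pin down exactly where the length-$k$ block $\del$ can sit and why the $\mu$-prefix can be taken so that $p_2\cdot\mu\del\ov\bet\ov\del \in \cF_2$ with $\mu$ ending inside the $xy\ov\alp\ov v$ window rather than deep in the $v^s$-power — this uses hypothesis 4.) (the run $e_2\RAS{\ov v^n}{}q_2$ avoids $\cF_2$, so the $L_2$-witness cannot be ``hidden'' in the tail either) together with the unique-path property \reflem{unam} and the simple-cycle structure from \reflem{lem:loop}. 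The subtlety is that a priori the hairpin of $w_{s,t'}$ could factor with $\del$ straddling the $v$-block boundary; hypothesis 1.) ($x < v\alp$, $\alp\leq x$) and hypothesis 2.) (maximality of $x$ as common prefix of $xy$ and $v\alp$) are precisely what rule out the competing alignments and make $xy\ov\alp\ov v$ — rather than some shift of it — the correct window, so the bulk of the proof is a careful case analysis verifying that only the intended alignment survives.
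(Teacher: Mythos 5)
Your plan follows the paper's proof essentially step for step: establish $uv^sxy\ov\alp\ov v^{t}\ov u\in\ccH$ for $s\geq t\geq n$, pump so that the $\ov v$-block on the right exceeds the $v$-block on the left, argue that the resulting word can only be obtained as a hairpin completion of a suffix in $L_2$, use hypothesis 4.)\ to confine that suffix to $xy\ov\alp\ov v^{t}\ov u$, and cancel the $uv^s\cdots\ov v^{s}\ov u$ context to obtain $xy\ov\alp\ov v=\mu\del\bet\ov\del\ov\mu$ with $p_2\cdot\mu\del\ov\bet\ov\del\in\cF_2$. Two corrections, though. First, your parameter choice in Step~2 is inconsistent: you ask for $t\gg\abs{uv^sxy\ov\alp}$, but membership of $w_{s,t}$ from Step~1 requires $s\geq t$, while $\abs{uv^sxy\ov\alp}\geq s\geq t$; the repair is simply to pump up the right-hand block starting from exponents $(s,t)$ with $s\geq t\geq n$ until it exceeds the left-hand one (the paper reaches $uv^sxy\ov\alp\ov v^{s+1}\ov u\in\ccH$ for $s$ large, and any excess suffices). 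Second, the ``careful case analysis'' you defer is much shorter than you anticipate and does not invoke \reflem{lem:loop}, \reflem{unam}, or any rigidity of strongly connected components: hypothesis 3.)\ (together with the fact that $u$ and $v$ label paths in $\cA$ at level $0$, which already encodes $q_1\RAS{\ov v}q_1$ and $q_1\RAS{\ov u}q_1'$) pins down the longest prefix of the pumped word lying in $L_1$ as exactly $uv^sxy\ov\alp$; a hairpin via $L_1$ would then force $\gam$ to begin with $uv^{s+1}$ while $\gam\del'$ remains a prefix of $uv^sxy\ov\alp$, which is impossible by hypothesis 1.)\ alone when $y=\e$ (it would force $\abs{x}\geq\abs{v\alp}$) and by hypotheses 1.)--2.)\ (via $xa\not\leq v\alp$ for $a=y(1)$) when $y\neq\e$. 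There is no straddling-$\del$ case to analyse on the $L_2$ side either: hypothesis 4.)\ gives $\abs{\gam'}\geq\abs{uv^s}$ for the $L_2$-witness $\gam'\del\bet\ov\del\ov{\gam'}$ directly, after which the claimed factorization and the consequence $\del\bet\ov\del\ov\mu\ov v^{*}\ov u\sse L_2$ (from $q_{02}\cdot uv^{s}=p_2$) drop out.
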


\begin{proof}
The conditions imply that 
$ uv^s xy  \ov \alp \ov v^{t} \ov u \in \ccH$ for all $s \geq t \geq n$.
Moreover, by 3.)\ the \hpc can be achieved with a prefix in $L_1$
and the longest prefix of $uv^s xy  \ov\alp \ov v^{t} \ov u$ belonging to 
$L_1$ is $uv^s xy \ov \alp$. 

If $\ccH$ is regular, then
we have $ uv^{s} xy  \ov \alp \ov v^{s+1} \ov u \in \ccH$, too,
as soon as $s$ is large enough, by a simple pumping argument.
For this \hpc we must use a suffix belonging to $L_2$. 
For  $y= \e$ this follows {}from $x < v\alp$. 
For  $y\neq \e$ we use $x < v\alp$
and additionally  that the word $xa$
with $a=y(1)$ is not a prefix of $v\alp$.

By 4.)\ the longest suffix of $ uv^{s} xy \ov\alp \ov v^{s+1} \ov u$ belonging to 
$L_2$ is a suffix of $xy  \ov\alp \ov v^{s+1} \ov u$. 
Thus, we can write 
$$ uv^{s} xy  \ov \alp \ov v^{s+1} \ov u = uv^{s} xy  \ov\alp \ov v \ov v^{s} \ov u
=  uv^{s} \mu \del\bet \ov \del \ov \mu \ov v^{s} \ov u$$
where  $\del\bet \ov \del \ov \mu \ov {v}^{s} \ov u \in L_2$
and $\abs\del = k$. We obtain $xy\ov \alp \ov v = \mu \del\bet \ov \del \ov \mu$. 

(Recall that our second DFA $\cA_2$ accepts $\ov {L_2}$.)
Hence, as $p_2 = q_{02} \cdot u$ and $p_2 = p_2 \cdot v$,  we see that 
$p_2\cdot \mu\del\ov\bet\ov\del \in\cF_2$. 

We conclude as desired:
if $\ccH$ is regular, then $p_2\cdot \mu\del\ov\bet\ov\del \in\cF_2$.
\end{proof}

\begin{example}
Let us take a look at Figure~\ref{steffen} again.
Let $A = (Q_0,\dead_1,\dead_2,0)$, $v = a$ and $u =\e$.
If we choose $x = a$, $y=\ov b$ and $(d_1,d_2) = (p_1,p_2)$ we can
see, that conditions 1.)\ to 4.)\ of Lemma~\ref{lem:egil} are satisfied but there
is no factorization $a\ov b\ov a\ov a = \mu\del\bet\ov\del\ov\mu$ with $\abs\del = k$ such that
$\del\bet\ov\del\ov\mu\ov u\in L_2$. Hence, the \hpc is not regular.
\end{example}

The next lemma yields another sufficient condition that $\ccH$ is not regular. 

\begin{lemma}\label{lem:final}
The existence of words $x, y \in \Sig^*$ and states
$(d_1,d_2) \in \cQ_1 \times \cQ_2$ satisfying 1.) to 4.) of Lemma~\ref{lem:egil},
but where for all factorizations $xy\ov \alp \ov v = \mu \del\bet \ov \del \ov \mu$
we have $p_2\cdot \mu\del\ov\bet\ov\del\notin\cF_2$
can be decided in \NL. 
\end{lemma}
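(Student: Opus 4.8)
The plan is to show that the decision problem in Lemma~\ref{lem:final} reduces, via a \svlstd, to a reachability question in a suitable product automaton of polynomial size, so that the whole test becomes an \NL-computation. The point is that the data to be guessed and verified — the words $x$, $y$, the states $d_1, d_2$, and the absence of a good factorization of $xy\ov\alp\ov v$ — are all bounded or can be tracked with only logarithmically many bits, \emph{except} the words $x$ and $y$ themselves, which a priori have unbounded length. So the first step is to argue that we never need to store $x$ or $y$ explicitly: condition 1.)\ forces $\alp \le x < v\alp$, so $x$ is determined by a proper prefix of $v\alp$ (hence by a position, i.e.\ $\Oh(\log n)$ bits, since $v$ and $\alp$ are on the input tape by \refrem{rem:test1}), while $y$ is only ever used through the bridge membership $y\in B(c_1,c_2,d_1,d_2)$ of condition 2.)\ together with the ``longest common prefix'' stipulation. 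Since bridges are precomputed (Lemma~\ref{lem:bridge}), checking $y\in B(c_1,c_2,d_1,d_2)$ for \emph{some} $y$ is just the assertion that $(c_1,c_2,d_1,d_2)$ is a bridge, and $c_1 = p_1\cdot x$, $c_2 = p_2\cdot\alp$ are computable from the guessed position of $x$ by a deterministic walk.

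Second, I would handle the ``longest common prefix'' clause of 2.)\ and conditions 3.)\ and 4.)\ . The requirement that $x$ be the longest common prefix of $xy$ and $v\alp$ splits into two subcases used already in the proof of Lemma~\ref{lem:egil}: either $y=\e$ and $x<v\alp$ (a finite check on the position), or $y\ne\e$ with first letter $a$ such that $xa$ is not a prefix of $v\alp$; in the latter case the only extra data is the letter $a$ and the position of $x$, and the existence of such a $y$ with $y(1)=a$ and $y\in B(c_1,c_2,d_1,d_2)$ is again a bridge-style reachability question (is $d_i$ reachable from $c_i\cdot a$, resp.\ $c_2\cdot a$ from $d_2$ backwards, with the right first step?), decidable in \NL. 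Conditions 3.)\ and 4.)\ each say: starting from $e_i = d_i\cdot\ov\alp$ resp.\ $e_2 = d_2\cdot\ov x$, the computation of $\cdot\ov v^n$ never re-enters a final state and ends in $q_i$; because $\ov v$ is a fixed word of length $\le n$ and $n$ is at most $\abs{\cQ_i}$-ish, the map ``apply $\ov v$ once avoiding final states'' is a partial function on $\cQ_i$ of which $n$-fold iteration is computable in logspace — I would precompute, by a further \svlstd, for each state the result of ``$\cdot\ov v^n$ avoiding final states'' (or $\bot$ if a final state is hit), so that 3.)\ and 4.)\ become table lookups. Thus all of 1.)--4.)\ are \NL-checkable given only $\Oh(\log n)$ bits of guessed data: a component, its state $A$, hence $v$ and $\alp$; a position for $x$; a first letter $a$ (or the flag $y=\e$); the states $d_1,d_2$.

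Third — and this is the main obstacle — I must decide the \emph{negated} factorization condition: for all factorizations $xy\ov\alp\ov v = \mu\del\bet\ov\del\ov\mu$ with $\abs\del = k$ we have $p_2\cdot\mu\del\ov\bet\ov\del\notin\cF_2$. A factorization of a word $w:=xy\ov\alp\ov v$ of the form $\mu\del\bet\ov\del\ov\mu$ is pinned down by two cut positions: where $\mu$ ends (so $\del$ occupies the next $k$ letters) and where $\ov\del$ begins (its position forced to match $\del$ under the bar, i.e.\ $\bet$ is sandwiched and $\ov\mu$ is the reversal-complement of $\mu$, which constrains the suffix). Since $\abs{w}$ is itself unbounded (it contains $y$), I cannot iterate over absolute positions in $w$; instead I exploit the structure $w = x\,y\,\ov\alp\,\ov v$ where $x, \ov\alp, \ov v$ have length $\Oh(n)$ and only $y$ is long. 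The key reduction is that $p_2\cdot\mu\del\ov\bet\ov\del$ is exactly the state reached in $\cA_2$ by reading the prefix $\mu\del$ of $w$ and then, \emph{instead of} $\bet$, reading $\ov\bet = \ov{(\text{the middle block})}$; and the middle block's interaction with $y$ is again mediated by a bridge $B(\cdot,\cdot,\cdot,\cdot)$ and by the fact that $\mu$ and $\ov\mu$ are related by the involution while $y$ is factored through its bridge. Concretely, I would enumerate (with $\Oh(\log n)$ bits) how the two cuts fall relative to the four blocks $x|y|\ov\alp|\ov v$ — finitely many ``shapes'' — and for each shape reduce the existence of a factorization with $p_2\cdot\mu\del\ov\bet\ov\del\in\cF_2$ to a conjunction of: a constant number of deterministic walks of length $\Oh(n)$ in $\cA_2$, plus one or two bridge-membership checks that capture the unbounded $y$-part. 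Because \NL is closed under complementation (stated in \refsec{secpn}), the universally quantified ``no good factorization'' statement is then also in \NL. Assembling the pieces: guess $A,\kappa,$ position of $x$, letter $a$, states $d_1,d_2$ in logspace; verify 1.)--4.)\ by table lookups and short walks; verify in \NL (via complementation) that no admissible factorization lands in $\cF_2$; accept if all hold. This is a single \NL-procedure, so by Lemma~\ref{lem:svlstd} the property of Lemma~\ref{lem:final} is decidable in \NL. The one delicate bookkeeping point, which I expect to be where the real work lies, is verifying that the reversal-complement symmetry $\ov\mu$ vs.\ $\mu$ together with $y$ being split across the $\del\bet\ov\del$ block is faithfully captured by bridges of the \emph{already-computed} table and not by some larger object — i.e.\ that ``shape enumeration plus bridges'' is genuinely exhaustive.
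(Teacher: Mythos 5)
Your handling of everything except the factorization condition matches the paper's Tests~2 and~3: guess a position in $v\alp$ to recover $x$, guess $d_1,d_2$ and a first letter $a$ of $y$ (or the flag $y=\e$), reduce the existence of $y$ to a bridge/reachability check, and verify 3.)\ and 4.)\ by iterated $\ov v$-walks. The genuine gap is exactly where you flag it yourself: the universally quantified factorization condition. Your plan --- enumerate the ``shapes'' of the two cuts relative to the blocks $x\,|\,y\,|\,\ov\alp\,|\,\ov v$ and capture the $y$-part by bridge membership --- cannot work for the shapes in which a cut falls inside $y$. A factorization $xy\ov\alp\ov v=\mu\del\bet\ov\del\ov\mu$ forces the suffix $\ov\del\ov\mu$ to be the letter-for-letter reverse complement of the prefix $\mu\del$; if $\mu\del$ reaches into $y$, this correlates individual letters of $y$ with letters elsewhere in the word, which the bridge table (pure reachability information) cannot express. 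Moreover, the statement has the form ``there exists $y$ such that \emph{for all} factorizations of \emph{that same} $y$ the state is non-final'', so you may not re-quantify $y$ independently per shape: the $y$ witnessing the bridge in 2.)\ and the $y$ whose factorizations you rule out must coincide.

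The paper closes this gap with one observation your proposal is missing: since $a=y(1)$ is chosen so that $xa$ is \emph{not} a prefix of $v\alp$, in \emph{every} factorization $xy\ov\alp\ov v=\mu\del\bet\ov\del\ov\mu$ the word $\mu\del$ must be a prefix of $x$. (Otherwise $\mu\del$ would begin with $xa$, while being the reverse complement of a suffix of a word ending in $\ov\alp\ov v$ of length at least $\abs{xa}\leq k+\abs{v}$, so its first $\abs{xa}$ letters would also form a prefix of $v\alp$ --- a contradiction.) Hence the problematic shapes never occur, every cut lies inside $x$, and $p_2\cdot \mu\del\ov\bet\ov\del = d_2\cdot \ov{x'}$ for a suffix $x'$ of $x$ with $\abs{x}-\abs{x'}=\abs{\mu}$; in particular the condition no longer depends on $y$ at all. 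The universal quantification over factorizations therefore collapses to a single deterministic walk $d_2\cdot\ov x$ in $\cA_2$, checking that no final state of $\cF_2$ is entered after $k$ or more steps (and, for $y=\e$, to the analogous finite check on the short word $x\ov\alp\ov v$, to which the algorithm has full access). Without this reduction, the ``shape enumeration plus bridges'' step of your procedure is not a correct \NL{} algorithm, so the proof is incomplete at its central point.
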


\begin{proof}
It is enough to perform either Test~2 or~3 below (non-deterministically chosen) and
to prove the \NL performance of these tests.
The tests distinguish whether the word $y$ is empty or non-empty. 

\begin{test}{2}
Decide the existence of a word $x \in \Sig^*$  and states
$(d_1,d_2) \in \cQ_1 \times \cQ_2$ satisfying 1.)\ to 4.)\ of Lemma~\ref{lem:egil}
with $y =\e$, but where for all factorizations $x\ov \alp \ov v = \mu \del\bet \ov \del \ov \mu$ we have  
$p_2\cdot \mu\del\ov\bet\ov\del\notin\cF_2$. 
If we find such a situation, then output that $\ccH$ is not regular.
\end{test}

\begin{test}{3}
Decide the existence of words $x, y \in \Sig^*$ 
with $y \neq \e$ and states
$(d_1,d_2) \in \cQ_1 \times \cQ_2$ satisfying 1.)\ to 4.)\ of Lemma~\ref{lem:egil},
but where for all factorizations $xy\ov \alp \ov v = \mu \del\bet \ov \del \ov \mu$ we have
$p_2\cdot \mu\del\ov\bet\ov\del\notin\cF_2$. 
If we find such a situation, then output that $\ccH$ is not regular. 
\end{test}
  
The correctness of both tests follows by Lemma~\ref{lem:egil} and
they can be performed as follows:
For both tests we guess the length of a word $x$ which satisfies 1.) and which is therefore a prefix of $v\alp$. Thus we can remember $x$, because $v\alp$ is available by the input. 
We guess states $(d_1,d_2)\in \cQ_1\times\cQ_2$,
and verify that conditions 3.)\ and 4.)\ hold, which is easy because we can reconstruct $x$. 
For Test~2 we check that $p_1\cdot x = d_1$ and $p_2\cdot \alp = d_2$.
Then we have to test whether for all factorizations $x\ov\alp\ov v = \mu\del\bet\ov\del\ov\mu$
with $\abs\del = k$
the condition $p_2\cdot \mu\del\ov\bet\ov\del\notin\cF_2$ holds.
This can easily be done in \NL because we have full access to the word $x\ov\alp\ov v$. 

Test~3 is a bit more tricky.
We guess $a\in\Sig$ and we check  that $xa$ is not a prefix of $v\alp$.
We have to verify that a path from $c_1$ to $d_1$ exists which is labelled by some non-empty word $y\in a\Sig^*$
and that a path from $c_2$ to $d_2$ exists which is labelled by $\ov y$.
This can be achieved by a graph reachability algorithm which uses forward edges in the
DFA of $L_1$ and simultaneously uses backwards edges in the DFA of $\ov{L_2}$.
Now, in a factorization $xy\ov\alp \ov v =\mu\del\bet\ov\del\ov\mu$ we cannot have that
$x$ is a proper prefix of $\mu\del$ otherwise $xa$ would be a prefix of $v\alp$.
But this was excluded by the choice of $a$. Thus, $\mu\del$ is a prefix of $x$
and $\ov \del \ov \mu $ is a suffix of $\ov x$. 
This means, to ensure that there is no
factorization with $p_2\cdot \mu\del\ov\bet\ov\del\in\cF_2$,
we do not need to remember the word $y$.  We just compute $d_2\cdot \ov x$ and during this computation we
validate that there are no final states in $\cF_2$ after $k$ or more steps.
\end{proof}

We claim that, if all three tests did not yield that the
hairpin completion \ccH is not regular,
then the hairpin completion is indeed regular.
This will complete the proof of Theorem~\ref{thm:main}.

\begin{lemma}\label{lem:finaltests}
Suppose no outcome of Tests 1, 2, and 3 is ``not regular''. Then the \hpc $\ccH$ is regular. 
\end{lemma}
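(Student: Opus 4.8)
The goal is to show that if none of the three tests fires, then $\ccH$ is regular. By \reflem{lem:str} we have the decomposition $\ccH = \bigcup_{\tau\in\cI\times\cF} B_\tau^{R_\tau}$, so it suffices to show that each $B_\tau^{R_\tau}$ is regular (a finite union of regular sets is regular). The plan is to analyse the structure of the regular sets $R_\tau$, using the strong assumptions accumulated by the earlier tests. By \refprop{prop:onesided} (Test~0) we may assume $L(\cA)$ is infinite, so at least one non-trivial strongly connected component exists; by \reflem{lem:loop} and \refrem{rem:test1} (Test~1 did not fire) every such component $\kappa$ is a simple (Hamiltonian) cycle, labelled by a single primitive word $v_A$ for each $A\in\kappa$, and every word $w$ labelling a path from $A$ to a final bridge is a prefix of some word in $v_A^+$. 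This rigidly constrains $R_\tau$: a path from an initial bridge $I$ to a final bridge $F$ either avoids all non-trivial components (there are only finitely many such paths, of bounded length, contributing a finite and hence regular piece), or it passes through exactly one component $\kappa$ — it cannot pass through two, since after leaving a cycle one can never return to level $0$ in a way that re-enters a non-trivial component (all arcs off level $0$ go up in level, and levels never decrease). So $R_\tau$ is, up to finitely many finite sets, of the shape $u\, v^* \,w$ for finitely many words $u,w$ and component words $v$, with $w$ a prefix of $v^\omega$.

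Next I would compute $B_\tau^{R_\tau}$ for such an $R_\tau$. A typical word is $u v^s w\,\bet\,\ov w\,\ov v^s\,\ov u$ with $\bet\in B_\tau$. The non-regular behaviour can only come from the unbounded synchronised exponents $s$. The key point is that, because $w$ is a prefix of $v^\omega$, the word $v^s w$ is itself a prefix of $v^{s+|w|}$, so $u v^s w\,\bet\,\ov w\,\ov v^s\,\ov u$ can be rewritten as $u v^s (w\,\bet\,\ov w\,\ov v^{|w|})\,\ov v^{s-|w|}\,\ov u$ — i.e. the ``$\bet$-part'' can be absorbed into a fixed finite middle block $w\bet\ov w\ov v^{|w|}$ sitting between $u v^{s'}$ and $\ov v^{s'}\ov u$. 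Now the crucial claim, which is exactly what Tests~2 and~3 were designed to guarantee via \reflem{lem:egil} and \reflem{lem:final}, is that whenever such a family of words lies in $\ccH$ for all large $s$, the hairpin completion of the whole family can \emph{also} be realised using a suffix in $L_2$ (equivalently, $p_2\cdot\mu\del\ov\bet\ov\del\in\cF_2$ for the induced factorisation $xy\ov\alp\ov v=\mu\del\bet\ov\del\ov\mu$). Since no test fired, this ``symmetry'' holds for every relevant configuration. The consequence is that the family $\{u v^{s'}\,w\bet\ov w\ov v^{|w|}\,\ov v^{s'}\ov u : s'\geq 0\}$ is contained in $\ccH$ \emph{and} already witnessed by a suffix in $L_2$ that does not itself depend on $s'$ in a synchronising way; concretely one shows $u v^{s'} w\bet\ov w\ov v^{s''}\ov u\in\ccH$ for \emph{all} $s',s''\geq n$, not just for $s'=s''$. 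That decoupling of the two exponents is exactly what makes $B_\tau^{R_\tau}$ regular: the set becomes (a boolean combination of) $uv^{\geq n}\cdot(\text{finite})\cdot\ov v^{\geq n}\ov u$ together with boundedly many exceptional short cases, all regular.

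The main obstacle — and where the real work lies — is the bookkeeping in the middle paragraph: showing that Tests~2 and~3 together cover \emph{all} the configurations $(A,v,u,x,y,d_1,d_2)$ that can arise from an actual path in $\cA$ leading through a component to a final bridge, and in both the $L_1$-dominant and (by the left/right symmetry of the definition of $\ccH$, with $L_1\leftrightarrow\ov{L_2}$) the $L_2$-dominant case. One must check that conditions 1.)–4.) of \reflem{lem:egil} are precisely the conditions extracted from the picture in \reffig{otto}, that $x$ ranging over prefixes of $v\alp$ with $\alp\leq x<v\alp$ and $y$ the appropriate bridge word exhaust the possibilities, and that when the symmetry condition $p_2\cdot\mu\del\ov\bet\ov\del\in\cF_2$ holds one can genuinely rewrite the word with \emph{independent} pumping exponents on the two sides — this last step uses \reflem{unam} (unique paths in $\cA$) to pin down that the factorisation is forced and that inserting extra copies of $v$ on either side keeps the word in $\ccH$ via the now-available $L_2$-suffix. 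Once that is done, assembling the finitely many regular pieces for each $\tau$ and taking the finite union over $\tau\in\cI\times\cF$ completes the proof of \reflem{lem:finaltests}, and with it \refthm{thm:main}.
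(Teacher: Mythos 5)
Your overall mechanism---rigidity from Test~1 forcing long prefixes into the shape $uv^{*}x$, and Tests~2/3 guaranteeing an $L_2$-suffix completion that decouples the two pumping exponents---is indeed the engine of the paper's proof. But your opening reduction is a genuine misstep: you propose to show that each piece $B_\tau^{R_\tau}$ of the decomposition from \reflem{lem:str} is regular, and that intermediate statement is in general \emph{false}, even when $\ccH$ is regular and no test fires. A word of $B_\tau^{R_\tau}$ has the form $\rho\bet\ov\rho$ with the \emph{same} $\rho\in R_\tau$ on both sides; if $R_\tau$ is infinite, say essentially $uv^{*}x$, and $B_\tau$ is a single word $\bet_0$, then $B_\tau^{R_\tau}=\set{uv^{s}x\bet_0\ov{x}\,\ov{v}^{\,s}\ov{u}}{s\geq 0}$ is a non-regular set of the type $\oneset{a^{s}ca^{s}}$. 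Regularity of $\ccH$ arises only because the \emph{union} also contains the desynchronized words $uv^{s}x\bet_0\ov{x}\,\ov{v}^{\,t}\ov{u}$ with $s\neq t$, and those lie in \emph{other} pieces $B_{\tau'}^{R_{\tau'}}$ (different final bridge, different $\bet$). So the goal you reduce to cannot be reached; what your decoupling step would actually produce---and what the paper proves---is that every $\pi\in\ccH$ lies in a regular subset of $\ccH$ of the form $uv^{s}xR_y\ov\alp\,\ov{v}^{\,t}\ov{u}$ with $s,t$ ranging (semi-)independently, a covering that deliberately cuts across the $\tau$-decomposition rather than respecting it.

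Two further points. First, your structural claim that a path meets at most one non-trivial component ``since after leaving a cycle one can never return to level $0$'' is a non sequitur: all non-trivial components sit on level $0$, and leaving a component does not mean leaving level $0$, so a level-$0$ path may a priori traverse several components. The paper avoids needing any such global structure theorem for $R_\tau$: it only takes a repeated quadruple within the first $n^{4}$ letters of $\gam$ (pigeonhole) and applies \reflem{lem:loop} at that single state. Second, the part you defer as ``bookkeeping'' is the actual proof: one must check that every long $\pi\in\ccH$ fits a configuration of \reflem{lem:egil} after a case split on whether a final state of $\cF_2$ is met during the backward computation $e_2\cdot\ov{v}^{\,n}$ (hypothesis 4 there is not automatic, and when it fails the $L_2$-suffix is available for a different reason), and one must still split the correlated family $\set{uv^{s}xR_y\ov\alp\,\ov{v}^{\,t}\ov{u}}{s\geq t\geq 0}$---which you never confront because you work with synchronized exponents throughout---into the bounded part $t\leq n$ (a finite union of regular sets) and the fully decoupled part $s,t>n$. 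Without these steps the argument does not close.
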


\begin{proof}
Let  $\pi  \in \ccH$. Write 
$\pi =  \gam \alp  \bet \ov \alp \ov \gam$ with $\abs \gam $ minimal such 
that either $\gam \alp \bet \ov \alp \in L_1$ or 
$\alp   \bet \ov \alp \ov \gam \in L_2$.
By symmetry we assume $\gam \alp \bet \ov \alp \in L_1$.
We may also assume that $\abs \gam> 2n^4$ 
(cf.\ Proposition~\ref{prop:onesided}
and Test~0).
We can factorize $\gam = uvw$ with $\abs {uv} \leq n^4 $ and $1 \leq \abs v \leq \abs w $
such that there are runs as in Figure~\ref{karl}.

\begin{figure}[h]
\begin{alignat*}{3}
	&L_1: &\quad&q_{01} \ras{u}{} p_1 \ras {v}{} p_1 \ras{w\alp\beta\ov\alp}{}&&
		f_1 \RAS {\ov w}{} q_1 \RAS {\ov v}{} q_1 \RAS {\ov u }{} q_1' \\
  &\ov{L_2}: &&q_{02} \ras{u}{} p_2 \ras {v}{} p_2 \ras{w\alp\ov\beta\ov\alp}{}&&
  f_2 \RAS {\ov w}{} q_2 \RAS {\ov v}{} q_2 \RAS {\ov u }{} q_2'
\end{alignat*}
\caption{Runs through $\cA_1$ and $\cA_2$ for the word $\pi$.
	We assume $f_1\in\cF_1$.}
\label{karl}
\end{figure}

We infer from Test~1 that $w \alp$ is a prefix of some word in $v^+$.
We may assume that $w\in v^+$ by adjusting the 
choices of $u$, $v$, and $w$.
(Possibly, $u$ gets longer but it is still shorter than $n^4$,
$v$ is transposed, and $w$ gets shorter.)

Hence, we can write $w \alp \bet = v^m xy$ with $m\geq 0$
such that  $v^m x$ is the maximal common prefix of $w\alp \bet $ and some word in $v^+$ with $\alp \leq x  < v\alp$. 

We see that for some $s \geq t \geq 0$ we can write 
$$ \pi = uv^s xy  \ov \alp \ov v^{t} \ov u.$$

Moreover, $ uv^s xy \ov\alp \ov v^{t} \ov u \in \ccH$ for all $s \geq t \geq 0$.
There are only finitely many choices for $u,v,x$ (due to the lengths bounds) 
and for each of them there is a regular set 
$R_y$ associated to the finite collection 
of bridges such that 
$$\pi \in \set{uv^s xR_y \ov\alp \ov v^{t} \ov u}{s \geq t \geq 0} \sse \ccH.$$

More precisely, we can choose $R_y = \oneset{\e}$ for $y=\e$, and otherwise we can choose 
$$R_y \in \set{B(c_1,c_2,d_1,d_2) \cap a \Sig^*}{
(c_1,c_2,d_1,d_2) \text{ is a bridge and } a \in \Sig}.$$

Note that the sets $\set{uv^s xR_y  \ov\alp \ov v^{t} \ov u}{s \geq t \geq 0}$ 
are not regular, in general. 
If we bound however the exponent $t$ by $n$, then 
the finite union 
$$\bigcup_{0 \leq t\leq n}\set{uv^s xR_y  \ov\alp \ov v^{t} \ov u}{s \geq t}$$
becomes regular. Thus, we may assume that $t > n$.
Let $e_2 = p_2 \cdot \alp\ov y \ov x$. We have $e_2 \cdot \ov v^{n} = q_2$
and, if there is a final state during the computation of $e_2\cdot \ov v^{n}$, then
for all $t \geq s\geq n$ and $y\in R_y$ we have that $uv^s xy \ov\alp \ov v^{t} \ov u \in \ccH$,
due to a suffix in $L_2$, and
$uv^n v^+ xR_y  \ov\alp \ov v^{+} \ov v^{n} \ov u \sse \ccH$.

Otherwise Test~2 or~3 tells us that for all $y\in R_y$ the word
$xy\ov\alp \ov v$ has a factorization $\mu\delta\nu\ov\delta\ov\mu$
such that $\abs\delta = k$ and $p_2\cdot \mu\del\ov\nu\ov\del\in\cF_2$.
The paths $q_{02} \cdot u = p_2$ and $p_2\cdot v = p_2$
yield $\delta\nu\ov\delta\ov\mu \ov v^*\ov u \sse L_2$
and, again,
$uv^n v^+ x R_y \ov\alp \ov v^{+} \ov v^{n} \ov u \sse \ccH$.

The \hpc $\ccH$ is a finite union of regular languages
and hence it is regular itself.
\end{proof}

\section{Conclusion and open problems}\label{secopen}
We have shown that the problem to decide the regularity of \hpc $\ccH$
for given regular languages $L_1$ and $L_2$ is \NL-complete. 
In particular it can be solved efficiently in parallel with Boolean circuits 
of polynomial size and poly-log depth, because
\NL is contained in \emph{Nick's Class} $\textbf{NC}_2$ (see e.g. \cite[Thm.~16.1]{pap94}). 

Our  \NL-result is based on the fact that the input is 
given by DFAs accepting $L_1$ and $\ov{L_2}$. 
It is open, what happens if the input is given in a more concise form, say
the input is given by NFAs accepting $L_1$ and $L_2$ (or $\ov{L_2}$).

Another result of \cite{DieKop11} says that the time complexity of the same
problem is in $\Oh(n^8)$. The full proof of this fact is quite involved, and it employs different ideas.   
It will appear elsewhere. It is open whether the $\Oh(n^8)$
time bound
is optimal. A further improvement on this time bound seems however  to ask for quite different ideas. So far, the best algorithm known (to us) considers all 
pairs of states in the automaton $\cA$. There are $\Omega(n^8)$ pairs and
it is unclear how to avoid this bound. 

There is also a very natural  variant of \hpc which was introduced in 
\cite{DiekertKM09}. It has been called \emph{partial hairpin completion} 
and further investigated in \cite{ManeaMM10}, where the operation has been called \emph{hairpin lengthening}.
The partial hairpin completion of $L_1$ and $L_2$ is given  by the set of words $\gam \alp \bet \ov \alp \ov {\gam'}$, where $\gam'$ is a prefix $\gam$ and $\gam \alp \bet \ov \alp \in L_1$ or $\gam$  is a prefix $\gam'$  and 
$\alp \bet \ov \alp \ov {\gam'}\in L_2$.

Again, the partial \hpc of a regular language is linear context-free, 
but not regular, in general. It is open whether regularity of the partial \hpc of regular languages is decidable.

%
\newcommand{\Ju}{Ju}\newcommand{\Ph}{Ph}\newcommand{\Th}{Th}\newcommand{\Ch}{C%
h}\newcommand{\Yu}{Yu}\newcommand{\Zh}{Zh}

\end{document}